\documentclass[11pt,letterpaper]{article}
\usepackage[margin=1in]{geometry}
\usepackage{amsmath,amssymb,amsfonts,mathrsfs}
\usepackage{bm,enumerate}
\usepackage{caption}
\usepackage{algorithm}
\usepackage[noend]{algpseudocode}
\let\oldnl\nl
\newcommand{\nonl}{\renewcommand{\nl}{\let\nl\oldnl}}
\usepackage[english]{babel}
\usepackage[utf8]{inputenc}
\usepackage{latexsym}
\usepackage{amsthm}
\usepackage{subfig}
\usepackage{float}
\usepackage{stmaryrd}
\usepackage{fancybox}
\usepackage{bigstrut,array,multirow}
\usepackage[dvipsnames]{xcolor}
\usepackage{hyperref}
\usepackage[capitalise]{cleveref}
\usepackage{tikz}
\usepackage{tikz-cd}
\usepackage{varwidth}
\usepackage{todonotes}

\usepackage{mathpazo}

\newcommand{\ket}[1]{\vert #1 \rangle}

\newcommand{\mybar}[1]{\lambda}

\newcommand{\ord}[1]{\textrm{ord}(#1)}

\newcommand{\AM}{\mathrm{AM}}
\newcommand{\coAM}{\mathrm{coAM}}
\newcommand{\MA}{\mathrm{MA}}

\newcommand{\NP}{\mathrm{NP}}

\newcommand{\soc}[1]{\mathrm{Soc}(#1)}

\newcommand{\mingen}{\textrm{MIN-GEN}}

\newtheorem{theorem}{Theorem}

\newtheorem{fact}{Fact}

\newtheorem{remark}{Remark}

\newtheorem{lemma}{Lemma}

\newenvironment{proof-sketch}{\trivlist\item[]\emph{Brief proof sketch}:}%
{\unskip\nobreak\hskip 1em plus 1fil\nobreak$\Box$
\parfillskip=0pt%
\endtrivlist}

\begin{document}

\title{Quantum Algorithms for Minimum Generating Set}
\author{
Bireswar Das$^{1}$ \quad
Udit Kumar$^{1}$ \quad
Kavita Samant$^{2}$ \quad
Dhara Thakkar$^{3}$ \\[1mm]
$^{1}$Indian Institute of Technology Gandhinagar, India\\
\quad
$^{2}$Shiv Nadar Institution of Eminence, Delhi-NCR, India
\quad\\
$^{3}$Nagoya University, Nagoya, Japan \\[1mm]
\texttt{\{bireswar,kumarudit\}@iitgn.ac.in},
\quad
\texttt{ks299@snu.edu.in},\\
\quad
\texttt{dharathakkar754@gmail.com}
}
\date{}
\maketitle
\begin{abstract}
    In this paper, we present a polynomial-time quantum algorithm for computing a minimum-sized generating set of solvable black-box groups. Next, we consider the class $\Gamma_d$ of black-box groups, where every non-abelian composition factor is isomorphic to a subgroup of the symmetric group $S_d$ for a fixed $d$. We design polynomial-time quantum algorithms to compute the direct product decomposition of abelian factor groups and solve the constructive membership problem for factor groups of groups from $\Gamma_d$. With the help of these algorithms, we design a quantum algorithm for computing a chief series of black-box groups from $\Gamma_d$. Using the chief series, we construct a polynomial-time quantum algorithm for computing minimum generating sets of black-box groups from $\Gamma_d$.    
    Finally, we show that the minimum generating set problem for general black-box groups is in $\NP \cap \coAM$. 
\end{abstract}
\section{Introduction}
 The minimum size of a generating set of a finite group $G$ is denoted by $d(G)$. In this paper, we study the problem $\mingen$ of computing $d(G)$ for a given group $G$. The minimum generating set problem is a fundamental problem in group theory and computational group theory \cite{gaschutz1959eulersche, Lucchini1995273,LD,derek,CollinsWeiss}. Given a group $G$ by a set of generators, the input generators may contain redundancy, and MIN-GEN asks for a smallest possible subset that generates $G$. In particular, algorithms working with the smallest generating set may potentially use less time and space. Papadimitriou and Yannakakis \cite{PAPADIMITRIOU1996161} conjectured that the analogous problem of computing a minimum generating set for a quasigroup, given by its Cayley table, is complete for $\NP[\log^2 n]$. In the Cayley table model, Arvind and Tor\'an~\cite{arvind2006complexity} designed a $\textrm{DSPACE}(\log^2 n)$ algorithm for $\mingen$ of groups. Collins, Grochow, Levet, and Wei{\ss} \cite{CollinsWeiss} provided strong evidence that the conjecture by Papadimitriou and Yannakakis does not hold for quasigroups as well as groups.

The mathematical characterization of a minimum generating set for abelian groups follows from their direct product decomposition. In 1959, Gasch\"utz \cite{gaschutz1959eulersche} developed some of the most important tools for studying group generation. Lucchini and Menegazzo~\cite{Lucchini1995273} introduced additional techniques, designing the first significant algorithm to solve $\mingen$ for solvable permutation groups in polynomial time.

In recent years, remarkable progress has been made in the design of polynomial-time algorithms for the minimum generating set problem. In a breakthrough paper, Lucchini and Thakkar~\cite{LD} designed the first polynomial-time algorithm to solve $\mingen$ for groups given by their Cayley tables. Shortly thereafter, Holt and Tracey~\cite{derek} developed a polynomial-time Las Vegas algorithm to solve the problem for permutation groups given by their generating sets.

In this paper, we investigate $\mingen$ within the black-box group model. Black-box groups were introduced by Babai and Szemer\'edi~\cite{bS} to handle matrix groups. This framework serves as a vast generalization of both Cayley table and permutation group representations. In this model, group elements are encoded as strings of a fixed length $n$. 
The model comes with an oracle that takes the encodings of elements $g$ and $h$ and outputs the encoding of the product $gh$. The input group is given by a set of generators.
Babai~\cite{babaibounded} essentially showed that an $\NP$ machine cannot distinguish between a black-box group representing $\mathbb{Z}_p$ and one representing $\mathbb{Z}_p \times \mathbb{Z}_p$.
Since  $\mathbb{Z}_p$ is generated by a single element, and $\mathbb{Z}_p\times \mathbb{Z}_p$ needs two generators, there is no polynomial-time classical algorithm to solve $\mingen$ even for abelian black-box groups. However, a quantum algorithm can easily be designed to compute a minimum generating set of an abelian black-box group using Cheung and Mosca's direct product decomposition algorithm \cite{mosca}.

Quantum algorithms have demonstrated advantages for several problems in the black-box model that are classically intractable. 
Watrous \cite{watrous} developed a polynomial-time quantum algorithm to compute the order of solvable black-box groups, a task classically intractable even for abelian groups.   Ivanyos, Magniez, and Santha \cite{santha} extended these results and developed efficient quantum algorithms for several other fundamental group theoretic tasks for larger classes of black-box groups. In 2000, Watrous \cite{Succinct} showed that group non-membership of a black-box group is in $\mathrm{QMA}$ but not in $\MA$ for some black-box oracle. Recently, Le Gall, Nishimura, and Thakkar~\cite{QCMA} showed that the group order verification problem belongs to $\mathrm{QCMA} \cap \mathrm{coQCMA}$.

\paragraph{Our Contribution.} In this paper, we design a quantum algorithm to solve $\mingen$ for solvable black-box groups (Section \ref{section;Minimum Generating Set of  Solvable Group}). Our algorithm is recursive and, in fact, works for factor groups of solvable black-box groups. As a subroutine for this problem, we design a quantum algorithm to compute minimal normal subgroups of factor groups of solvable groups. 

We next consider $\Gamma_d$, the class of groups whose non-abelian composition factors (see \Cref{sec:prelims}) are subgroups of the symmetric group $S_d$ for some fixed $d$. This class was previously investigated by Ivanyos, Magniez, and Santha \cite{santha}, who designed polynomial-time quantum algorithms to solve several interesting group theoretic tasks\footnote{The algorithms presented in \cite{santha} are more general; rather than requiring $d$ to be constant, they remain efficient even when $d$ is polynomially bounded.}. This same class of groups played a central role in Luks's seminal paper \cite{luks1982isomorphism} on the polynomial-time isomorphism algorithm for bounded-degree graphs. For fixed $d$, we design a polynomial-time quantum algorithm to solve $\mingen$ for groups in $\Gamma_d$.

To solve $\mingen$, we design polynomial-time quantum algorithms to perform several fundamental group theoretic tasks in the factor group of black-box groups in $\Gamma_d$, for fixed $d$. These results might be of independent interest. For example, we design quantum algorithms to compute direct product decompositions of abelian factor groups, constructive membership in factor groups, and several other tasks listed by Babai and Beals \cite{bb} and later by Ivanyos, Magniez, and Santha \cite{santha}. We also design a polynomial-time quantum algorithm to compute a chief series of black-box groups in $\Gamma_d$, for fixed $d$. We note that one of the crucial ingredients in our algorithm is a result of Gasch\"utz \cite{gaschutz1959eulersche}, which allows us to design a recursive algorithm using a minimal normal subgroup. This is why we need a chief series rather than a composition series, as a chief series provides a minimal normal subgroup at each recursive step, whereas a composition series only provides simple factors.

In the decision version of MIN-GEN, the input is an integer $k$ and a black-box group $G$ by a generating set. The task is to decide whether $G$ has a generating set of size at most $k$. We give an $\mathrm{AM}$ protocol to compute a minimum generating set for general black-box groups. This shows that the decision version of  $\mingen$ is in $\NP\cap \coAM$.

\begin{remark}
Even though solvable groups are a subclass of $\Gamma_d$, we present the algorithm for solvable groups separately as they are distinct in several aspects. The general $\Gamma_d$ case demands a larger set of group theoretic and algorithmic machinery, whereas the simpler solvable case serves to intuitively illuminate some of the core ideas.    
\end{remark}

\section{Preliminaries}\label{sec:prelims}
This section presents the necessary preliminaries from group theory and quantum computation that will be used throughout the paper. For additional background, we refer the reader to~\cite{robin,quantum}.

A subgroup $H$ of a group $G$ is called \emph{subnormal} if there exists a series of subgroups $H=H_0 \leq H_1\leq \cdots\leq H_k=G$ such that each $H_i$ is \emph{normal} in $H_{i+1}.$ A subnormal series with strict inclusions, where each factor $G_i/G_{i-1}$ is simple, called a \emph{composition
series}, and its factors $G_i/G_{i-1}$ are called \emph{composition factors}. The \emph{commutator} of two elements $x,y$ in a group $G,$ denoted as $[x,y],$ and is defined as $[x,y]=x^{-1}y^{-1}xy.$ The subgroup generated by the set $\{[x,y]:\,x,y\in G\}$, denoted $G'$, is called the \emph{commutator subgroup} of $G$.  
The \emph{higher commutator subgroups} of a group $G$ are defined inductively by
$G^{(0)}=G; \, G^{(i+1)}=(G^{(i)})',$ where $(G^{(i)})'$ is the commutator subgroup of $G^{(i)}.$ Moreover, each $G^{(i)}$ is a characteristic subgroup of $G$, thus normal in $G$. The series $G=G^{(0)}\trianglerighteq G^{(1)}\trianglerighteq  G^{(2)}\trianglerighteq  \cdots$ is called the \emph{derived series} of $G.$ A group $G$ is \emph{solvable} if its derived series terminates at its trivial subgroup. Moreover, every subgroup and every quotient group of a solvable group is solvable.

The parameter $\nu(G)$ is the smallest natural number $\nu$ such that every non-abelian composition factor of $G$ can be viewed as a subgroup of $S_r$ for $r\le\nu$. For a natural number $d$, let $\Gamma_d$ denote the class of finite groups $G$ such that $\nu(G)\le d$.

A non-trivial normal subgroup $N$ of a group $G$ is called a \emph{minimal normal subgroup} if there exists a normal subgroup $X$ of $G$ such that $1\leq X\leq N,$ then $X=N.$ A minimal normal subgroup of a finite group is a direct product of isomorphic simple groups. The \emph{socle} of a group $G,$ denoted by $\operatorname{Soc}(G),$ is generated by all minimal normal subgroups of $G.$ Moreover, $\operatorname{Soc}(G)$ is a direct product of simple groups.  We note that 
$Soc(\mathbb{Z}_{p_1^{e_1}}\times \ldots\times \mathbb{Z}_{p_k^{e_k}})=\mathbb{Z}_{p_1}\times\ldots\times\mathbb{Z}_{p_k}$.

A \emph{chief series} of a group $G$ is a sequence of subgroups
$1=G_0\triangleleft G_1\triangleleft\cdots\triangleleft G_{u-1}\triangleleft G_u=G$ such that for all $1\leq i\leq u,$ $G_i\triangleleft G$ and each factor $G_{i}/G_{i-1}$ is a minimal normal subgroup of $G/G_{i-1}.$
A non-trivial finite group $G$ always possesses a chief series, and each $G_{i}/G_{i-1}$ is called a \emph{chief factor} of $G$. 
Every chief factor of a finite group is a direct product of
isomorphic simple groups.
\

\begin{theorem} [\cite{lift}] \label{thm: abelian lifting} 
    Let $G$ be a group and let $N = \langle n_1,\ldots, n_\ell \rangle$ be an abelian minimal normal subgroup of G. Suppose $G/N = \langle h_1N, \dots , h_dN\rangle$ and $d(G/N) = d.$ Then $d\le d(G)\le d+1$. Moreover,
\begin{enumerate}
\item[(i)] If $d(G) = d,$ then either $G=\langle h_1,\dots,h_d\rangle$ or there exists $i$ and $j$ where $1\le i\le d$ and  $1\le j\le \ell $ such that $G=\langle h_1,\dots ,h_{i-1},h_in_j,h_{i+1},\dots,h_d \rangle$; 
\item[(ii)] If $d(G) = d+1$, then $G=\langle h_1,\ldots,h_d, x\rangle$ for any  $x\in N{\backslash}\{1\}$.
\end{enumerate}
\end{theorem}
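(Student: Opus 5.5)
The plan has four steps: the bounds $d\le d(G)\le d+1$, then part (ii), then the case split in part (i), and finally the abelian case of (i), which carries the real content.

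\emph{Bounds.} The lower bound holds because any generating set of $G$ maps onto a generating set of $G/N$. For the upper bound, put $H=\langle h_1,\dots,h_d\rangle$. Then $HN=G$, so $H\cap N$ is normalized by $H$. Since $N$ is abelian, $H\cap N$ is also normalized by $N$, hence $H\cap N\trianglelefteq G$. Minimality of $N$ gives either $H\cap N=N$, so $H=G$, or $H\cap N=1$, so $H$ is a complement to $N$.

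\emph{Part (ii).} Take any $x\in N\setminus\{1\}$. Then $\langle H,x\rangle\cap N$ is normal in $G$ (same argument) and contains $x\neq 1$, so it equals $N$ and $\langle H,x\rangle=G$. This also gives $d(G)\le d+1$.

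\emph{Part (i): reduction.} Assume $d(G)=d$ and $H\ne G$, so $H$ is a complement to $N$. Let $G=\langle g_1,\dots,g_d\rangle$. Each $g_iN$ can be written as a word in the $h_jN$; by a Gaschütz-type argument I would aim to replace $(g_i)$ by a generating $d$-tuple of the form $(h_1m_1,\dots,h_dm_d)$ with $m_i\in N$. Gaschütz's lemma says every generating $d$-tuple of $G/N$ lifts to a generating $d$-tuple of $G$ whenever $d(G)\le d$, so such $m_i$ exist.

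\emph{Non-abelian case.} If $N$ is non-abelian, then $H\cap N$ is normalized by $H$ but not necessarily by $N$, so the argument above does not apply verbatim. However, the theorem explicitly assumes $N$ is abelian, so the non-abelian case can be set aside.

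\emph{Main step: the abelian case of (i).} I will identify $N$ with an irreducible $\mathbb{F}_p[G/N]$-module, say $N\cong\mathbb{F}_p^k$. Each tuple $(m_1,\dots,m_d)\in N^d$ gives a subgroup $\langle h_1m_1,\dots,h_dm_d\rangle$, which is either all of $G$ or a complement to $N$. Complements of the form $\langle h_im_i\rangle$ correspond to the tuples $(m_i)=(\delta(h_i))$ for derivations $\delta\in Z^1(G/N,N)$.

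Thus generation fails exactly on the image $W$ of the injective map $Z^1\to N^d$, $\delta\mapsto(\delta(h_iN))_i$. This $W$ is an $\mathbb{F}_p$-subspace of $N^d$. Since some tuple generates, $W\ne N^d$. A proper subspace cannot contain all the vectors
\[
e_{i,j}=(0,\dots,n_j,\dots,0)
\]
(with $n_j$ in position $i$), because these span $N^d$: the $n_j$ generate $N$ as a group, hence span $N$ over $\mathbb{F}_p$. So some $e_{i,j}\notin W$, which means $\langle h_1,\dots,h_in_j,\dots,h_d\rangle=G$, as claimed in (i).

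The step I expect to need the most care is the identification of the non-generating tuples with a linear subspace. The check is that $\langle h_im_i\rangle$ being a complement means the map $h_iN\mapsto m_i$ extends consistently to a crossed homomorphism on $G/N$. That holds because the complement is determined by a section $G/N\to G$ whose discrepancy from the section of $H$ is a cocycle. Since $Z^1$ is a group under addition, its image $W$ is a subspace, and the proof closes.
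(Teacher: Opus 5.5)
The paper gives no proof of this statement; it quotes it from \cite{lift}, so there is no in-paper argument to compare with. Your proof is correct and complete, with finiteness of $G$ used implicitly, which matches the paper's setting.

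The key steps all check out:
\begin{enumerate}
\item[(1)] For any $m\in N^d$, $K_m=\langle h_1m_1,\dots,h_dm_d\rangle$ satisfies $K_mN=G$ and $K_m\cap N\trianglelefteq G$ because $N$ is abelian. So $K_m$ is either $G$ or a complement to $N$.
\item[(2)] Take $H$ as base point with section $s$. The complements are exactly the sets $\{s(q)\delta(q)\}$ with $\delta\in Z^1(G/N,N)$. Such a complement contains $h_im_i$ if and only if $m_i=\delta(h_iN)$, since it meets the coset $h_iN$ only once. Hence the non-generating tuples are precisely the $\mathbb{F}_p$-subspace $W$.
\item[(3)] The tuples $e_{i,j}$ span $N^d$, so a proper subspace $W$ must miss one of them.
\end{enumerate}
Your argument for (ii) actually shows more than is stated: $\langle h_1,\dots,h_d,x\rangle=G$ for every $x\in N\setminus\{1\}$, whatever $d(G)$ is.

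You can also drop the appeal to Gasch\"utz's lemma (which the paper records as Theorem~\ref{thm:lift}) and make the proof fully self-contained. Fix any generating $d$-tuple of $G/N$. Each complement contains exactly one lift of it, and distinct complements contain distinct lifts. Indeed, a non-generating lift $x$ generates a complement $K_x$, and any complement containing $x$ contains $K_x$ and so equals it. So the number of non-generating lifts equals the number of complements. That number is $|Z^1(G/N,N)|$ if a complement exists and $0$ otherwise, which does not depend on the tuple.

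Now apply this to $(g_1N,\dots,g_dN)$, where $\{g_1,\dots,g_d\}$ generates $G$. This tuple has at least one generating lift, namely $(g_1,\dots,g_d)$, so $|Z^1(G/N,N)|<|N|^d$. Therefore $(h_1N,\dots,h_dN)$ also has a generating lift, i.e.\ $W\neq N^d$.

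The spanning argument in (3) is exactly what justifies checking only the $d\ell$ candidates in the lifting step of Algorithm~\ref{fig:mingen}.
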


\begin{lemma}[\cite{isaac}, see page 114]\label{lem:higher}
    Let $G$ be a group. Let $G^{(i)}$ be a higher commutator subgroup of $G,$ generated by $S= \{g_1,\cdots,g_k\}$. If $N$ is a normal subgroup of $G,$ then $(G/N)^{(i)} = G^{i}N/N= \langle \{g_iN: g_i\in S\}\rangle$. 
\end{lemma}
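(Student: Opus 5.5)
We show $(G/N)^{(i)} = G^{(i)}N/N$ by induction on $i$, and then read off the generating set from the isomorphism $G^{(i)}N/N \cong G^{(i)}/(G^{(i)}\cap N)$, i.e.\ from the image of $G^{(i)}$ under the canonical projection $\pi\colon G\to G/N$, $g\mapsto gN$.

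The key tool is the fact that a surjective homomorphism maps the commutator subgroup onto the commutator subgroup. For any surjection $\varphi\colon A\to B$ we have $\varphi([x,y])=[\varphi(x),\varphi(y)]$. Since $\varphi$ is onto, every commutator of $B$ arises this way. The image of the subgroup generated by a set is the subgroup generated by the image of the set. Hence $\varphi(A')=B'$.

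The induction runs as follows.
\begin{itemize}
\item[(a)] The base case $i=0$ is trivial: $(G/N)^{(0)}=G/N=\pi(G)=GN/N$.
\item[(b)] For the inductive step, suppose $(G/N)^{(i)}=\pi(G^{(i)})$.
\item[(c)] Restrict $\pi$ to $G^{(i)}$. This gives a surjection $G^{(i)}\to \pi(G^{(i)})=(G/N)^{(i)}$.
\item[(d)] Applying the key tool to this surjection gives
\[
(G/N)^{(i+1)}=\big((G/N)^{(i)}\big)'=\pi\big((G^{(i)})'\big)=\pi(G^{(i+1)}).
\]
\item[(e)] Finally, $\pi(G^{(i+1)})=\{gN : g\in G^{(i+1)}\}=G^{(i+1)}N/N$, which completes the induction.
\end{itemize}

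For the generator statement, suppose $G^{(i)}=\langle S\rangle$. The image of a generated subgroup under a homomorphism is generated by the images of the generators, so
\[
\pi(G^{(i)})=\langle \pi(S)\rangle=\langle g_jN : g_j\in S\rangle .
\]
Combined with the induction, this gives $(G/N)^{(i)}=G^{(i)}N/N=\langle g_jN : g_j\in S\rangle$, as required.

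No step is a real obstacle. The only point needing care is that surjectivity is used essentially: it guarantees that every commutator of the image group is the image of a commutator. The normality of $N$ is used only so that $G/N$ is a group and $\pi$ is a homomorphism.
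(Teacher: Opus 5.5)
Your proof is correct: the induction using that a surjective homomorphism maps $A'$ onto $B'$, followed by taking images of the generators under $\pi$, is the standard textbook argument. The paper does not prove this lemma itself but only cites Isaacs, and you correctly read the statement's $G^{i}$ as $G^{(i)}$ and renamed the generator index to $g_j$ to avoid the clash with $i$.
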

\begin{lemma}[\cite{isaac}] \label{issac(subnormal)}Let $T$ be a simple subnormal group of $G$. If $T$ is non-abelian, then the normal closure $\langle T^G\rangle$ of $T$ in $G$ is a minimal normal subgroup. Otherwise, $\langle T^G\rangle$  is a $p$-group, where $|T|=p$.

\end{lemma}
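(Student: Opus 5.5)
Set $M=\langle T^G\rangle$. Since $T$ is subnormal, fix a chain $T=H_0\trianglelefteq H_1\trianglelefteq\cdots\trianglelefteq H_k=G$. The plan is to study the conjugates $T^g$, which are also simple and subnormal, and to show that in each case they interact in a controlled way.

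\textbf{Non-abelian case.} The key tool is the standard fact that if $A$ is a simple non-abelian subnormal subgroup and $B$ is any subnormal subgroup of $G$, then either $A\le B$ or $[A,B]=1$.

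This fact is proved by induction on the subnormal defect of $B$. Suppose $B\trianglelefteq B_1$ with $B_1$ subnormal of smaller defect. By induction, either $[A,B_1]=1$, which gives $[A,B]=1$, or $A\le B_1$. In the second case $A\cap B\trianglelefteq A$, so $A\cap B$ is $1$ or $A$. If it is $A$, then $A\le B$. If it is trivial, then $[A,B]\le A\cap B=1$, since both $A$ and $B$ are normal in $B_1$... but $A$ need not be normal in $B_1$. So instead I would use $[A,B]\le B$ and note that $[A,B]$ is normalized by $A$. This gives $[A,B,A]\le A\cap B=1$. By the three subgroups lemma, $[A,A,B]=1$, and $A=[A,A]$ because $A$ is perfect. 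Hence $[A,B]=1$. This step is the main obstacle, and it is where perfectness of $T$ is used.

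Applying the fact to the conjugates $T^g$, any two distinct conjugates commute elementwise and intersect trivially, since each is simple and centerless. Then $M$ is generated by pairwise commuting normal subgroups $T_1,\dots,T_r$. A standard argument shows $M=T_1\times\cdots\times T_r$. Indeed, $T_i\cap\prod_{j\ne i}T_j$ is central in $T_i$, hence trivial, because $\prod_{j\ne i}T_j$ centralizes $T_i$.

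Now let $1\ne X\trianglelefteq G$ with $X\le M$. We have $[X,M]\ne1$: the center of $M$ is trivial, and $X\le M$ is nontrivial. So $[X,T]\ne1$ for some conjugate, and by the conjugation invariance of $X$ we may take it to be $T$ itself. Since $X$ is normal and hence subnormal, the key fact gives $T\le X$. Therefore all conjugates of $T$ lie in $X$, so $X=M$, and $M$ is minimal normal.

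\textbf{Abelian case.} Here $|T|=p$. I would show by induction on the defect that $\langle T^{H_i}\rangle$ is a $p$-group. At $H_0$ this holds trivially.

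For the inductive step, suppose $P=\langle T^{H_i}\rangle$ is a $p$-group. Since $T\le H_i$ and $H_i\trianglelefteq H_{i+1}$, each conjugate of $P$ by an element of $H_{i+1}$ is again a subnormal $p$-subgroup. Moreover, $P$ is normal in $H_i$, and therefore $P\le O_p(H_i)$. The subgroup $O_p(H_i)$ is characteristic in $H_i\trianglelefteq H_{i+1}$, so it is normal in $H_{i+1}$. Hence $\langle T^{H_{i+1}}\rangle\le O_p(H_i)$, which is a $p$-group. Taking $i=k-1$ gives that $M$ is a $p$-group.
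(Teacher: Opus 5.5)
Your proof has a genuine gap in the key fact, at the step you flagged as the main obstacle. In the case $A\le B_1$, $A\cap B=1$, you know that $A$ normalizes $B$ and that $A$ normalizes $[A,B]$. This yields only $[A,B,A]\le[A,B]\le B$. To conclude $[A,B,A]\le A$ you would need $[A,B]\le N_G(A)$, which is some normality of $A$ inside $B_1$. That is exactly what you noted is unavailable. So the inference $[A,B,A]\le A\cap B=1$ is unjustified, and without it the three subgroups lemma cannot be invoked.

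Your induction on the defect of $B$ only reduces the fact to the situation $B\trianglelefteq B_1$ with $A$ subnormal in $B_1$. That situation is the whole content of the fact. Your argument handles only its easy subcase $A\trianglelefteq B_1$, where indeed $[A,B]\le A\cap B=1$.

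The missing ingredient is a second induction, on the subnormal defect $m$ of $A$ in $B_1$; the cases $m\le 1$ are the ones you already handled. For $m\ge 2$, write $A=A_0\trianglelefteq A_1\trianglelefteq\cdots\trianglelefteq A_m=B_1$.

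\begin{enumerate}
\item Since $A_{m-1}$ and $B$ are both normal in $B_1$, we get $[A,B]\le[A_{m-1},B]\le A_{m-1}\cap B$.
\item Also $A_{m-1}\cap B\trianglelefteq A_{m-1}$. Apply the induction hypothesis inside $A_{m-1}$, where $A$ has smaller defect, to this normal subgroup.
\item Either $A\le A_{m-1}\cap B\le B$, or $A$ centralizes $A_{m-1}\cap B$ and hence centralizes $[A,B]$.
\item In the latter case $[A,B,A]=[B,A,A]=1$. The three subgroups lemma gives $[[A,A],B]=1$, and $A=[A,A]$ yields $[A,B]=1$.
\end{enumerate}

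Once the key fact is repaired this way, the rest of your proposal is correct. Distinct conjugates of $T$ commute and intersect trivially, so $M=T_1\times\cdots\times T_r$ with $Z(M)=1$. Your minimality argument is valid. The abelian case, using that $O_p(H_i)$ is characteristic in $H_i\trianglelefteq H_{i+1}$ and hence normal in $H_{i+1}$, is also fine.
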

\subsection{Black-Box Groups}
Black-box groups were introduced by Babai and Szemer\'edi \cite{bS}, in which each element of a group $G$ is encoded by a string of fixed length $n$, which is called the \emph{encoding length}. Suppose $\pi: G \rightarrow \{0,1\}^n$ is an encoding of elements of $G$ in binary strings. If $\pi$ is an injective map, then each element of $G$ has a \emph{unique encoding}. In general, the encoding of group elements need not be unique. In the black-box group model, the input group is assumed to be given by a generating set. 

We now define black-box groups in the framework of quantum computing. We always consider a unique encoding $\pi:G\rightarrow \{0,1\}^n$ in the quantum setting. For simplicity, we denote $\pi(g)$ by $g$. We have two quantum oracles operating on $2n$ qubits. The first quantum oracle $U_G$ maps $\ket{g}\ket{h}$ to $\ket{g}\ket{gh}$, and second quantum oracle $U_{G^{-1}}$ maps $\ket{g}\ket{h}$ to $\ket{g}\ket{g^{-1}h}$, where $g,h\in G$. These oracles output an error on an invalid input (for more details,
see \cite{Succinct},\cite{watrous},\cite{mosca}).

\emph{A straight line program (SLP)} is a circuit $C(x_1,\ldots,x_k)$ having multiplication gates of fan-in 2, an inverse gate of fan-in 1, along with an output gate and input gates $x_1,\dots x_k$. The \emph{size} is the number of gates in it.
\begin{theorem} \label{thm:reachability} (i) (Reachability Theorem, Babai and Szemer\'edi \cite{bS}) Let $G$ be a group generated by $x_1,\ldots,x_k$. For every $x\in G$ there is an SLP $C$ of size $(1+\log|G|)^2$ such that $x =C(x_1,\ldots,x_k)$. (ii) (Uniform Sampling, Babai \cite{babai1991local}) ~ For any group oracle, there exists a randomized algorithm that, on input $S$ and $\epsilon>0$, outputs an element of the group $H=\langle S\rangle$ in time polynomial in $n+\log(1/\epsilon)$. Moreover, for every $h\in H$,
$\Pr[\text{the algorithm outputs } h]\in(1/|H|-\epsilon,1/|H|+\epsilon).$

\end{theorem}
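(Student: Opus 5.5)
Both parts are classical results that the paper imports from \cite{bS} and \cite{babai1991local}, so the plan is to recall the standard \emph{cube-doubling} arguments behind them. In both, the central object for a sequence $z_1,\dots,z_t\in G$ is the cube
\[
C(z_1,\dots,z_t)=\{z_1^{e_1}\cdots z_t^{e_t}: e_i\in\{0,1\}\}.
\]
Write $C_i=C(z_1,\dots,z_i)$ and $K_i=C_i^{-1}C_i$.

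\textbf{Part (i).} The basic observation is this: if $z_{i+1}\notin K_i$, then $C_i$ and $C_iz_{i+1}$ are disjoint. Hence $|C_{i+1}|=2|C_i|$, and we can make at most $\log|G|$ such extensions.

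I will build $z_1,z_2,\dots$ inductively, starting from $C_0=\{1\}$. Suppose $K_i\neq G$. Since $1\in K_i$ and the Cayley graph of $G$ with respect to $x_1,\dots,x_k$ is connected, there are $y\in K_i$ and a generator $x_j$ with $yx_j\notin K_i$. Set $z_{i+1}=yx_j$. The element $y$ has the form $(z_1^{e_1}\cdots z_i^{e_i})^{-1}z_1^{f_1}\cdots z_i^{f_i}$. So, given SLPs for $z_1,\dots,z_i$, an SLP for $z_{i+1}$ needs only $O(i)$ additional gates.

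The process stops at some $t\le\log|G|$ with $K_t=G$. At that point every $x\in G$ equals $a^{-1}b$ with $a,b\in C_t$. Counting gates: the $z$'s cost $\sum_{i\le t}O(i)=O(\log^2|G|)$ in total, and forming $a^{-1}b$ costs $O(t)$ more. This gives the bound $(1+\log|G|)^2$ once the constants are tracked. The only delicate point here is bookkeeping the exact constant.

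\textbf{Part (ii).} The plan is Babai's approach, in two phases.

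First, I construct efficiently a sequence $z_1,\dots,z_t$ with $t=O(\log|G|)\le O(n)$ such that a random subproduct $z_1^{e_1}\cdots z_t^{e_t}$ is nearly uniform on $H$. The tool is the Erdős--Rényi-type estimate: if the cube $C_t$ already covers a constant fraction of $H$ in the sense that $|C_t^{-1}C_t|$ is large, then appending $O(\log(1/\epsilon))$ further random subproducts of the existing cube drives the distribution to within $\epsilon$ of uniform.

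To reach a large cube without knowing $|H|$, I repeatedly extend the cube by elements produced by short random walks/subproducts, as in (i). Here randomization replaces the exhaustive search for $y$ and $x_j$. A local-expansion lemma then guarantees that each new element doubles the cube with constant probability, until $K_t$ is a substantial portion of $H$. Because $|H|\le 2^n$, $O(n)$ rounds suffice.

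\textbf{Main obstacle.} I expect the hardest step to be this expansion lemma: showing that a random element built from the current cube and the generators escapes $K_i$ with noticeable probability. My first attempt would be to prove it via the spectral gap/diameter bound supplied by part (i), since the Cayley graph has diameter $O(\log^2|H|)$ and hence polynomial mixing, and to conclude with a Fourier/Plancherel bound on the $\ell_2$ distance to uniform.
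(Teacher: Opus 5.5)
Part (i) is the standard Babai--Szemer\'edi cube argument and is fine. Counting multiplication and inversion gates, your construction uses at most $\sum_{i=1}^{t-1}(2i+1)+2t=t^2+2t-1<(1+t)^2$ of them. (The paper proves neither part; it only cites \cite{bS} and \cite{babai1991local}.) Part (ii), however, has a genuine gap at exactly the step you call the main obstacle, and the route you propose for it fails.

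Part (i) bounds the length of a straight-line program, not word length: SLPs reuse intermediate values, and walks cannot. So (i) says nothing about the diameter of the Cayley graph of $H$ with respect to the input generators. For $H=\mathbb{Z}_{2^n}$ and $S=\{1\}$ that diameter is $2^{n-1}$, and the lazy random walk needs order $4^n$ steps to mix. A diameter bound of $2t$ holds only for $\mathrm{Cay}(H,\{z_1,\dots,z_t\})$ once $C_t^{-1}C_t=H$. Those $z_i$ were obtained by a search that needs membership tests in the exponentially large sets $K_i$, which is precisely what the randomized phase must avoid. Appealing to (i) here is therefore circular.

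The same example shows the doubling claim is not automatic. Start from the ideal cube $C_i=\{0,\dots,2^i-1\}$ and take the literal randomization of (i), $z_{i+1}=a^{-1}b\,x_j^{\pm1}$ with $a,b$ uniform in $C_i$. This element leaves $K_i=\{-(2^i-1),\dots,2^i-1\}$ only when $b-a=\pm(2^i-1)$, i.e.\ with probability $O(4^{-i})$. Replacing $x_j^{\pm1}$ by a polynomially long walk on $\mathrm{Cay}(H,S)$ does not help. The missing lemma is this: some efficiently samplable element escapes $C_i^{-1}C_i$ with probability bounded below, without any diameter bound for $\mathrm{Cay}(H,S)$. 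Your sketch leaves it unproved.

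Your Erd\H{o}s--R\'enyi step is also misstated. Appending a new element $w$ halves the expected squared $\ell_2$-distance to uniform because $w$ is uniform on $H$. Random subproducts of the existing cube are supported on $C_t$, and $C_t^{-1}C_t=H$ gives only $|C_t|\ge|H|^{1/2}$; largeness of $|C_t^{-1}C_t|$ is not largeness of $|C_t|$. So the halving computation does not apply to these elements, and nothing in your sketch replaces it.

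In Babai's argument, the near-uniform elements fed to Erd\H{o}s--R\'enyi come from a lazy random walk on $\mathrm{Cay}(H,\{z_1,\dots,z_t\})$. Once $C_t^{-1}C_t=H$, that graph has degree $O(t)$ and diameter at most $2t$. The local expansion lemma ($|\partial A|\ge|A|/(2d)$ for $|A|\le|V|/2$ in a vertex-transitive graph of diameter $d$) then gives the walk polynomial mixing time. Roughly $2\log|H|+2\log(1/\epsilon)$ such samples serve as Erd\H{o}s--R\'enyi generators.
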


\begin{theorem}[\cite{fast}]\label{Thm: Solvable derived} Let $G$ be a black-box group given by a set of generators. Then, the following tasks can be performed in Monte Carlo polynomial time:
(i) construct the commutator subgroup;
(ii) decide whether or not $G$ is solvable or nilpotent. If $G$ is solvable, the derived series can be constructed.
\end{theorem}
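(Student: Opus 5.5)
The plan is to reduce everything to one randomized primitive: computing the normal closure of a subset in a black-box group. Everything then rests on the \emph{random subproduct} lemma. If $S=\{s_1,\dots,s_m\}$ generates a group $K$ and $H<K$ is a proper subgroup, then a random subproduct $s_1^{\epsilon_1}\cdots s_m^{\epsilon_m}$, with independent uniform $\epsilon_i\in\{0,1\}$, lies outside $H$ with probability at least $1/2$. The proof takes the last index $i$ with $s_i\notin H$ and conditions on all the other bits; only one of the two choices of $\epsilon_i$ can land in $H$. I would also use that any strictly increasing chain of subgroups of $G$ has length at most $\log|G|\le n$, since $|G|\le 2^n$ under the unique encoding.

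\textbf{Normal closure.} Let $G=\langle g_1,\dots,g_k\rangle$ and $T\subseteq G$, and the aim is $N=\langle T^G\rangle$. Maintain a generating list $L$, initialized to $T$. In each round, form a random subproduct $w$ of $L$ and a random subproduct $g$ of the generators of $G$ (or a nearly uniform element from Theorem~\ref{thm:reachability}(ii)), and append $w^{g}$ to $L$. If $\langle L\rangle$ is not yet normal in $G$, then some generator of $G$ fails to normalize it, and a two-fold application of the random subproduct lemma shows that $w^g\notin\langle L\rangle$ with probability at least $1/4$. Each success strictly enlarges $\langle L\rangle$, and there can be at most $n$ enlargements. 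A Chernoff bound then shows that $O(n+\log(1/\epsilon))$ rounds suffice to reach $\langle L\rangle=N$ with probability at least $1-\epsilon$. Since we never test membership, the algorithm is Monte Carlo. To keep the output small, I would finally replace $L$ by $O(n+\log(1/\epsilon))$ random subproducts of $L$. By the same chain argument these generate $N$ with high probability.

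\textbf{Commutator subgroup and series.} For (i), use $G'=\langle [g_i,g_j]:1\le i<j\le k\rangle^G$ and apply the normal closure routine. For (ii), iterate: the routine outputs a short generating set of $G^{(i)}$, and applying it inside $G^{(i)}$ gives $G^{(i+1)}$. The derived series strictly decreases until it stabilizes, so after at most $n$ steps we reach $G^{(n)}$. $G$ is solvable iff $G^{(n)}=1$, and this is checked by comparing each output generator with the identity string, which is possible because encodings are unique. Nilpotency is handled in the same way with the lower central series $\gamma_{i+1}=\langle [x,g_j]: x\in \mathrm{gens}(\gamma_i)\rangle^G$, testing whether $\gamma_{n+1}=1$.

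The main obstacle is controlling the failure probability and the generator count simultaneously across $O(n)$ nested calls. Each call must succeed with probability at least $1-\epsilon/(2n)$, and every intermediate generating set must stay of size $\mathrm{poly}(n)$ so that the random subproducts remain cheap. I would handle this by the subproduct-compression step after every call, together with a union bound over the at most $2n$ calls.
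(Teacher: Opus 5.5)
Your proposal is correct. The paper gives no proof of its own here and simply cites Babai, Cooperman, Finkelstein, Luks and Seress, and your argument is essentially their random-subproduct method: normal closure by appending conjugates $w^g$ of random subproducts (each round escaping $\langle L\rangle$ with probability at least $1/4$), compression of the generating list by random subproducts, and $G'$ obtained as the normal closure of commutators of generators, iterated along the derived and lower central series, with the chain-length bound $\log|G|\le n$ and a union bound over the $O(n)$ calls giving Monte Carlo polynomial time.
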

\begin{theorem}[Watrous~\cite{watrous}]\label{thm:watrous}
     Suppose that $G$ is a solvable black-box group with unique encoding. Then there exists a polynomial-time quantum algorithm in the input size that computes the order of $G$ and tests membership in $G$. 
\end{theorem}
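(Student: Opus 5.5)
The plan is Watrous's: build the uniform superposition $\ket{H}=|H|^{-1/2}\sum_{h\in H}\ket{h}$ layer by layer along the derived series, and read off the order from the abelian factors. First, \Cref{Thm: Solvable derived} gives, in Monte Carlo polynomial time, generating sets for the derived series
\[
G=G^{(0)}\trianglerighteq G^{(1)}\trianglerighteq\cdots\trianglerighteq G^{(m)}=1 .
\]
Its length $m$ is at most $\log|G|\le n$. Each factor $G^{(i)}/G^{(i+1)}$ is abelian.

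The algorithm proceeds bottom-up, from $i=m$ down to $i=0$. The invariant is that it can prepare (approximations of) $\ket{G^{(i+1)}}$ in polynomial time, and has already computed $|G^{(i+1)}|$ and a generating set $g_1,\dots,g_k$ of $G^{(i)}$. The base case is trivial, since $\ket{1}$ is a basis state.

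For the inductive step, consider the map
\[
f:\mathbb{Z}^k\to\{\text{coset states}\},\qquad f(a_1,\dots,a_k)=\ket{g_1^{a_1}\cdots g_k^{a_k}G^{(i+1)}} .
\]
We compute $f$ by applying the product $g_1^{a_1}\cdots g_k^{a_k}$ (via repeated squaring with the oracle $U_G$) to a fresh copy of $\ket{G^{(i+1)}}$.

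\begin{itemize}
\item \emph{Unique encoding.} Because the encoding is unique, distinct cosets give orthogonal states and equal cosets give identical states. Hence $f$ hides the subgroup $L=\{a: \prod g_j^{a_j}\in G^{(i+1)}\}$ of $\mathbb{Z}^k$, where we may restrict exponents modulo a bound $2^{\mathrm{poly}(n)}$.
\item \emph{Solving the hidden subgroup problem.} We apply the standard abelian hidden subgroup / Shor-type Fourier sampling over $\mathbb{Z}^k$, using the continued-fraction and Smith normal form post-processing. This recovers $L$, hence the structure $\mathbb{Z}^k/L\cong G^{(i)}/G^{(i+1)}$ together with generators $x_1,\dots,x_r$ and orders $e_1,\dots,e_r$ realizing a direct decomposition of the factor.
\item \emph{Order.} We get $|G^{(i)}|=|G^{(i+1)}|\prod_j e_j$.
\item \emph{Next state.} We prepare $\ket{G^{(i)}}$ by creating $\bigotimes_j (e_j^{-1/2}\sum_{c_j<e_j}\ket{c_j})$, multiplying $x_1^{c_1}\cdots x_r^{c_r}$ into a copy of $\ket{G^{(i+1)}}$, and uncomputing the $c_j$. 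Uncomputing is possible because $(c_1,\dots,c_r)$ is determined by the coset, which is determined by the group element. Again, unique encoding is what makes the interference clean.
\end{itemize}

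At $i=0$ we obtain $|G|$. For membership of $g$, we use a Hadamard test: put an ancilla in $\ket{+}$, apply controlled left-multiplication by $g$ to $\ket{G}$, and measure the ancilla in the Hadamard basis. If $g\in G$ the state is unchanged and the outcome is always $\ket{+}$. If $g\notin G$ then $\ket{gG}\perp\ket{G}$, and the outcome $\ket{-}$ occurs with probability $1/2$. Repeating boosts the success probability. Note that we cannot simply compare $|\langle G,g\rangle|$ with $|G|$, since $\langle G,g\rangle$ need not be solvable.

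The main obstacle is error control in the recursion. Each layer's Fourier sampling succeeds only with high probability, and the states $\ket{G^{(i+1)}}$ are consumed when used, so every call at layer $i$ spawns fresh preparations at layer $i+1$. I would first check that the cost stays polynomial: each layer uses only $\mathrm{poly}(n)$ copies of the layer below, preparing one $\ket{G^{(i)}}$ uses a single $\ket{G^{(i+1)}}$ once generators are known, and the classical data (the $x_j,e_j$) is computed once and reused, so the total work is $m\cdot\mathrm{poly}(n)$. Second, I would verify that the preparation errors, which come from wrong $e_j$ values occurring with small probability, add up across the $m\le n$ layers and can be driven down to $2^{-\mathrm{poly}(n)}$ by amplification at each stage.
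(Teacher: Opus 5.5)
\paragraph{Gap: erasing the index register.} The paper does not prove this theorem; it cites Watrous. Your outline follows his strategy: derived series, Fourier sampling on each abelian layer with coset states as the hiding function, building $\ket{G^{(i)}}$ from $\ket{G^{(i+1)}}$, and a Hadamard test on $\ket{G}$ for membership. The order computation and the membership test are fine.

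The problem is the erasure in your ``Next state'' step. After multiplying $x_1^{c_1}\cdots x_r^{c_r}$ into $\ket{G^{(i+1)}}$ you hold $\sum_c\ket{c}\ket{x^cG^{(i+1)}}$. Removing $\ket{c}$ requires \emph{computing} $c$ from an element $y\in x^cG^{(i+1)}$. That is constructive membership, essentially a discrete logarithm, in the black-box quotient $G^{(i)}/G^{(i+1)}$. Knowing that $c$ is determined by $y$ does not give a procedure. There is no efficient classical reversible circuit for this in the black-box model, and handling this step is the technical core of Watrous's construction.

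\paragraph{Consequence for your resource count.} Suppose you erase $c$ by running a discrete-log/HSP subroutine coherently, say on $(s,t)\mapsto\ket{x^sy^tG^{(i+1)}}$. That subroutine uses $\mathrm{poly}(n)$ further copies of $\ket{G^{(i+1)}}$ for each copy of $\ket{G^{(i)}}$. If those copies are consumed, the count multiplies at each of up to $n$ layers and becomes $\mathrm{poly}(n)^m$. Your claim that one copy of $\ket{G^{(i+1)}}$ yields one copy of $\ket{G^{(i)}}$ therefore needs one of two things: a proof that the auxiliary copies come back approximately intact, so they can be reused as catalysts, or a different erasure method.

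\paragraph{A clean fix.}
\begin{enumerate}
\item Refine each abelian layer into steps $N\triangleleft\langle N,h\rangle$ with cyclic quotient of known order $r$. This is possible because every group between $G^{(i+1)}$ and $G^{(i)}$ is normal in $G^{(i)}$.
\item Fourier-transform the index register and measure. This yields $\ket{\psi_b}=r^{-1/2}\sum_{a}\omega^{ab}\ket{h^aN}$ for a known, uniformly random $b$, where $\omega=e^{2\pi i/r}$.
\item Keep one such state with $\gcd(b',r)=1$ as a catalyst.
\item The map $\ket{u}\ket{v}\mapsto\ket{u}\ket{u^tv}$ sends $\ket{\psi_b}\ket{\psi_{b'}}$ to $\ket{\psi_{b-tb'}}\ket{\psi_{b'}}$. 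Choosing $t\equiv bb'^{-1}\pmod r$ turns each fresh $\ket{\psi_b}$ into $\ket{\psi_0}=\ket{\langle N,h\rangle}$ and leaves the catalyst unchanged.
\end{enumerate}
Only with a step like this do your one-in/one-out copy accounting and your additive error analysis go through.
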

 In the next theorem, the parameter $\nu(G)$ is defined in the previous subsection.
\begin{theorem}[Ivanyos, Magniez and Santha~\cite{santha}] \label{thm: Ivanyos, Magniez and Santha}
   Let $G$ be a finite black-box group with not necessarily unique encoding. Suppose that there are oracles for (a) computing the order of an element of $G$, and (b) solving the constructive membership problem in elementary abelian subgroups of $G$. Then the following tasks can be solved by quantum algorithms of running time polynomial in the input + $\nu(G)$: (i) constructive membership test in $G$, (ii) compute the order of $G$ and a presentation for $G$, (iii) find generators for the center of $G$, (iv) construct a composition series $1=G_0\triangleleft G_1\triangleleft G_2\triangleleft\cdots
\triangleleft G_r=G$ for $G$, together with nice representation of the composition factors $G_i/G_{i+1}$,
(v) find Sylow subgroups of $G$.
\end{theorem}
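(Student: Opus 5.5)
Since the statement is quoted from Ivanyos, Magniez and Santha, I would not rebuild it from scratch. The plan is to reconstruct their argument as an instance of the Babai--Beals framework \cite{bb}, with the quantum ingredients used to remove the classical obstacles.

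The classical core is the Babai--Beals result. Given oracles for element orders, for constructive membership in elementary abelian subgroups, and for deciding whether an element is trivial (``identity test''), one obtains Las Vegas algorithms polynomial in the input size and $\nu(G)$ for tasks (i)--(v). They work down the series
\[
1\le O_\infty(G)\le \mathrm{Soc}^*(G)\le \mathrm{Pker}(G)\le G .
\]
\begin{enumerate}
\item Here $O_\infty(G)$ is the solvable radical. The non-abelian socle of $G/O_\infty(G)$ is a product of simple groups $T_1\times\dots\times T_m$, and each $T_i$ embeds in $S_r$ with $r\le\nu(G)$.
\item The conjugation action of $G$ on $\{T_1,\ldots,T_m\}$ gives a permutation representation. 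Its kernel is $\mathrm{Pker}(G)$, and $G/\mathrm{Pker}(G)$ is handled by permutation group algorithms.
\item The simple factors $T_i$ receive ``nice representations'' as permutation groups of degree polynomial in $\nu(G)$, found by random sampling (\Cref{thm:reachability}(ii)) and the order oracle.
\item The solvable radical is treated by building its derived series (\Cref{Thm: Solvable derived}) and refining to elementary abelian layers. The given constructive-membership oracle for elementary abelian subgroups then makes each layer effectively a vector space.
\end{enumerate}
Gluing these constructive membership tests along the series yields (i). Composition factors, and hence (iv), come from the layers. Multiplying the layer orders gives $|G|$ in (ii). Presentations in (ii) come from lifting relators of the factors using SLPs of bounded size (\Cref{thm:reachability}(i)). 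Centers (iii) and Sylow subgroups (v) follow by the standard reductions through the series.

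The hypothesis that the encoding is not unique is where quantum computation enters. Classically, the identity test and membership in a normal subgroup $N$ are unavailable, so computing inside $G/N$ is the obstacle. I would handle it as follows.
\begin{itemize}
\item I would model ``equal modulo $N$'' as a hidden subgroup problem.
\item When the relevant quotient layer is abelian, I would solve it by quantum Fourier sampling over an abelian group built from the elementary abelian layer, whose structure is known through the membership oracle.
\item This gives generators of the hidden kernel, and hence a way to test equality of cosets.
\end{itemize}
I would proceed inductively up the series. At each stage the already-built subgroup $N$ serves as the ``identity'' of a new black-box group $G/N$, and the abelian hidden-subgroup routine certifies membership in $N$ for elements of the next layer. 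The non-abelian layers need no quantum help, since their nice permutation representations already give an identity test.

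The step I expect to be hardest is constructing the solvable radical and its elementary abelian refinement. This requires computing normal closures and derived subgroups inside factor groups $G/N$ using only the weak identity test described above. I would first establish a lemma that the ``hidden normal subgroup'' of a solvable layer can be found in quantum polynomial time given the oracles (a) and (b), by reduction to the abelian hidden subgroup problem. I would then check that every call made in the Babai--Beals recursion is of this form. All error probabilities would be kept exponentially small so that the polynomially many calls compose.
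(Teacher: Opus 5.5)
\textbf{Gap: the quantum layer in your proposal rests on a misdiagnosis, and your Babai--Beals paragraph already proves the statement.} The paper does not prove this theorem; it quotes it from Ivanyos--Magniez--Santha. There it is essentially the Babai--Beals theorem: with oracles (a) and (b), tasks (i)--(v) have \emph{Las Vegas} algorithms polynomial in the input length and $\nu(G)$, and a Las Vegas algorithm is in particular a quantum algorithm.

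Your claim that classically ``the identity test \ldots\ is unavailable'' is wrong. In the Babai--Szemer\'edi model with non-unique encoding, the identity test is part of the black box. In any case oracle (a) supplies it, since $g=1$ iff $\operatorname{ord}(g)=1$, and $g=h$ iff $\operatorname{ord}(gh^{-1})=1$. Likewise, working modulo the solvable radical and refining it into elementary abelian layers is done classically in the Babai--Beals framework using (a) and (b). So the step you flag as hardest is not needed.

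\textbf{The quantum layer is also not sound as described.} Quantum Fourier sampling for an abelian hidden subgroup needs a hiding function that is constant on cosets with canonical values. That means either a unique encoding, or coset states $\vert gN\rangle$ prepared from copies of $\vert N\rangle$. A group with non-unique encoding, accessed only through the black box and oracles (a) and (b), supplies neither. If instead you canonicalize elements via the coordinate vectors returned by (b), the hidden subgroup is the kernel of a linear map, found by Gaussian elimination with no quantum step at all. Your proposed lemma, finding the ``hidden normal subgroup'' of a solvable layer from (a) and (b) by abelian HSP, is therefore neither established nor required.

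The hidden-subgroup ideas you describe belong where the paper \emph{instantiates} the oracles, not to this theorem:
\begin{itemize}
\item In Theorem~\ref{thm:santha}, $G$ has a unique encoding, so Shor's algorithm and abelian HSP give (a) and (b).
\item In Theorem~\ref{thm: Factor-santha}, the identity test in $G/N$ is membership in $N$, and (a) comes from testing $g^{m/p_i}\in N$.
\item Also in Theorem~\ref{thm: Factor-santha}, (b) uses the superpositions $\vert N\rangle$ built in Subsection~\ref{sec:superposition over H}.
\end{itemize}
To repair your proposal, delete the quantum paragraph and the final ``hardest step'' paragraph. Add the observation that the identity test comes from the black box or from (a). What remains is a correct proof.
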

\begin{theorem}[\cite{santha}]\label{thm:santha}
 Let $G$ be a black-box group with unique encoding. Then the tasks $ (i) - (v)$ listed in Theorem \ref{thm: Ivanyos, Magniez and Santha} can be solved by a quantum algorithm of running time polynomial in the input + $\nu(G)$.
\end{theorem}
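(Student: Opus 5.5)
The plan is to reduce Theorem~\ref{thm:santha} to Theorem~\ref{thm: Ivanyos, Magniez and Santha}. That theorem already delivers tasks (i)--(v) in time polynomial in the input plus $\nu(G)$, provided we can supply two oracles: (a) computing orders of elements of $G$, and (b) constructive membership in elementary abelian subgroups of $G$. Since a unique encoding is a special case of a not-necessarily-unique encoding, it suffices to implement (a) and (b) by polynomial-time quantum algorithms whenever the encoding is unique. We then substitute these subroutines for the oracle calls; each call costs polynomial time and the number of calls is polynomial, so the total running time stays polynomial in the input plus $\nu(G)$.

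For (a), fix $g\in G$ and consider the map $f:\mathbb{Z}\to\{0,1\}^n$, $f(k)=g^k$. Using the oracle $U_G$ and repeated squaring, $f$ can be evaluated quantumly in time polynomial in $n$ and $\log k$. Because the encoding is unique, $f$ is periodic with period exactly $\mathrm{ord}(g)$, and it takes distinct values within a period. Shor's period-finding algorithm therefore returns $\mathrm{ord}(g)\le 2^n$ with high probability in polynomial time. Equivalently, $\langle g\rangle$ is cyclic and hence solvable, so Theorem~\ref{thm:watrous} applied to $\langle g\rangle$ also yields $|\langle g\rangle|=\mathrm{ord}(g)$.

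For (b), let $H=\langle h_1,\dots,h_k\rangle\le G$ be elementary abelian of exponent $p$, and let $x\in G$. We first decide whether $x\in H$; Theorem~\ref{thm:watrous} does this because $H$ is solvable. If $x\in H$, we define $f:\mathbb{Z}_p^{k+1}\to\{0,1\}^n$ by $f(a_0,a_1,\dots,a_k)=x^{a_0}h_1^{a_1}\cdots h_k^{a_k}$. This is a homomorphism from $\mathbb{Z}_p^{k+1}$ into the abelian group $\langle H,x\rangle=H$, and uniqueness of the encoding makes its level sets exactly the cosets of the kernel. The standard abelian hidden subgroup algorithm over $\mathbb{Z}_p^{k+1}$ then recovers generators of the kernel in polynomial time. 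Solving a linear system over $\mathbb{F}_p$, we find a kernel vector with $a_0=1$; one exists since $x\in H$. That vector gives $x=\prod h_i^{-a_i}$, which is an explicit straight-line expression for $x$ in the generators. If $p$ is not given, we obtain it as $\mathrm{ord}(h_1)$ using (a).

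The main obstacle is the correctness of hidden subgroup sampling, which depends on unique encoding. Fourier sampling needs the states $\ket{f(a)}$ to be identical for equal group elements, so that different representatives of one element do not spoil the interference. This is precisely why Theorem~\ref{thm:santha} assumes unique encoding. The remaining work is to check that all error probabilities can be amplified to be polynomially small, so that the union bound over the polynomially many oracle calls made by the algorithm of Theorem~\ref{thm: Ivanyos, Magniez and Santha} still gives overall success with high probability.
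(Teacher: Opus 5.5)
Your reduction is correct, and it matches how this result is obtained in the cited work of Ivanyos, Magniez and Santha; the paper itself only cites the result and gives no proof. With unique encoding, oracle (a) comes from Shor's order finding and oracle (b) from the abelian hidden subgroup algorithm applied to $\mathbb{Z}_p^{k+1}\to H$, after which Theorem~\ref{thm: Ivanyos, Magniez and Santha} yields tasks (i)--(v); your preliminary membership test via Theorem~\ref{thm:watrous}, which guarantees the map is a homomorphism, is a valid way to handle the case $x\notin H$.
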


\begin{remark}
    During computations for black-box groups, the generating set may tend to grow. One can use a Monte Carlo algorithm \cite{fast} to ensure its size remains $O(n)$, where $n$ is the encoding length.
\end{remark}

\section{Minimum Generating Set of  Solvable Group}\label{section;Minimum Generating Set of  Solvable Group}
Suppose $G$ is a solvable black-box group given by a generating set $S$, with a unique encoding of length $n$. We present a recursive algorithm to compute a minimum generating set of $G$. In the recursive case, the algorithm must handle a more general setting: given a group $G$ together with a normal subgroup $N$ of $G$, specified by a generating set, it computes a minimum generating set of $G/N$.
In Algorithm \ref{fig:mingen}, the basic idea behind the algorithm is described. 
\begin{algorithm}[H]
\renewcommand{\alglinenumber}[1]{\textbf{Step #1:}}
\caption{MIN-GEN$(G/N)$}\label{fig:mingen}

\begin{algorithmic}[1]
\Statex \textbf{Input:} A generating set $S$ of a solvable group $G$, and a generating set $T$ of a normal subgroup $N$ of $G$.

\Statex \textbf{Output:} A minimal generating set of the quotient group $G/N$.
\Statex
\Statex \textbf{MIN-GEN}$(G/N)$

\hspace{-1.4cm}\textbf{Base Case:} If $G = N$, return $\emptyset$. Else if $G/N$ is simple, return $\{gN\}$ for some $g \in G \setminus N$. 

\hspace{-1.4cm}\textbf{Recursive Step:}
\State Compute a minimal normal subgroup $M/N$ of $G/N$. (Note that $M \ne G $).
\State Recursively compute a minimum generating set of $G/M$. (Note that $(G/N)/(M/N) \cong (G/M)$).
\State Use the minimum generating set of $G/M$ to find a minimum generating set of $G/N$ according to Theorem \ref{thm: abelian lifting}.
\end{algorithmic}
\end{algorithm}

\paragraph{Step 3 (Lifting generators).} We begin by describing Step 3 of Algorithm \ref{fig:mingen} in detail. Suppose the recursive call to the algorithm has returned a generating set $\{h_1M,\ldots, h_dM\}$ of $G/M$. We note that $(h_1N)(M/N),\ldots, (h_dN)(M/N)$ is a minimum generating set of $(G/N)/(M/N)$ as $G/M\cong (G/N)/(M/N)$ via the isomorphism $xM\mapsto  (xN)(M/N)$. Let $M/N =\langle m_1N,\ldots, m_\ell N\rangle$. To apply \cref{thm: abelian lifting}, we check whether $\langle h_1N, \ldots, h_dN\rangle= G/N$. This check can be done using Watrous's \cite{watrous} order-finding algorithm for factor groups of a solvable group. The same algorithm can be used to check if $\langle h_1N,\ldots,h_{i-1}N,h_iNm_jN,\ldots,h_dN\rangle=G/N$ for each $1\le i \le d$ and $1\le j \le \ell$. If any of these checks is true, \cref{thm: abelian lifting} guarantees that the size of the minimum generating set continues to the same as that of $G/M$. The algorithm can return a generating set based on which of the above-mentioned checks is satisfied. If none of the checks are satisfied, then the algorithm can return $\langle h_1N,\ldots,h_dN, xN\rangle$, where $xN\in M/N\setminus \{N\}$.

\paragraph{Base Case.} For the base case $G = N$ of Algorithm \ref{fig:mingen}, subgroup equality can be checked in polynomial time \cite{watrous}. For the trivial group, the minimum generating set is \emph{defined} to be $\emptyset$.  When $G/N$ is simple, it has prime order $p$, which can be verified by computing the order of $G$ and the order of $N$ via Theorem \ref{thm:watrous}.

\paragraph{Step 1.} To compute a minimal normal subgroup of $G/N$, we first find the derived series of  $G/N$. Using Theorem~\ref{Thm: Solvable derived}, we can compute the derived series of $G,$ that is,
\(G = G^{(0)} \trianglerighteq G^{(1)} \trianglerighteq \ldots \trianglerighteq G^{(r)} = 1,\)
where $r\le n$, and obtain a generating set of size $O(n)$ for each $G^{(i)}$ of $G.$

 By using Lemma \ref{lem:higher}, we obtain a generating set of size $O(n)$ for each higher commutator subgroup $(G/N)^{(i)}$ using the generators of $G^{(i)}$. Next, we find  index  $i$ such that $(G/N)^{(i+1)} =1$ but $(G/N)^{(i)} \ne 1$. Equivalently, we need to check if $G^{(i+1)}N=N$. In other words, we need to check if $G^{(i+1)} \subseteq N$, which can be done using Theorem \ref{thm:watrous}. Therefore, the required index $i$ can be found by a linear search in the derived series of $G$ until $G^{i+1} \subseteq N$.

Since $G/N$ is solvable, we observe that $(G/N)^{(i)}$ is an abelian subgroup of $G/N$. Now, to decompose this quotient group $(G/N)^{(i)}$ into direct factors, apply Watrous's algorithm for abelian factor groups (see section 4.2 in \cite{watrous}). 
One can check that Watrous's algorithm gives a direct product decomposition of the form $(G/N)^{(i)}=\langle x_1 N \rangle\times\ldots \times \langle x_k N \rangle,$ where the order of $(x_jN)$ is a prime power $p_j^{r_j}$ and $k\leq n$ (See Appendix~\ref{sec:A1} ). 

Let $p=p_1$. We can easily compute the Sylow-$p$ subgroup of $(G/N)^{(i)}$ by just collecting the cyclic factors having an order which is a power of $p$. Reordering the indices, we may assume without loss of generality that $Syl_p((G/N)^{(i)})=\langle x_1 N \rangle\times\ldots \times \langle x_m N \rangle$ where $m\leq k\leq n$. 

Since the higher commutators of a group are characteristic subgroups and $(G/N)^{(i)}$ is abelian, thus $Syl_p((G/N)^{(i)})$ is normal in $G/N$. Moreover, since the socle of a group is characteristic, $\soc{Syl_p((G/N)^{(i)})}$ is normal in $G/N$. 

 We can see that $\soc{Syl_p((G/N)^{(i)})}= \langle u_1 N \rangle\times\ldots \times \langle u_m N \rangle$ where $u_j=x_j^{p^{r_j-1}}$ if $r_j>1$, otherwise $u_j=x_j$. This can be easily computed. Moreover, the socle is normal. Therefore, it will contain at least one minimal normal subgroup of $G/N$. In the following, we crucially use the fact that the socle is an elementary abelian $p$-group.

 Let $G/N =\langle g_1N, \dots, g_sN \rangle.$ Consider the action of $G/N$ on $\soc{Syl_p((G/N)^{(i)})}$ by conjugation. Since $\soc{Syl_p((G/N)^{(i)})}\unlhd G/N$, this action is well-defined, and each generator $g_i N \in \{g_1 N, \dots, g_s N\}$ and $u_j N \in \{u_1 N, \dots, u_{m} N\}$, we can express each $g_i^{-1} u_j g_i N $ in the generating set of $\soc{Syl_p((G/N)^{(i)})}$ as follows:
\(
g_i^{-1} u_j g_i N = u_1^{a_{j1}^{(i)}} u_2^{a_{j2}^{(i)}} \ldots u_{m}^{a_{jm}^{(i)}} N,\quad \textit{ where } \quad 0\le i \le s \textit{ and } 0\le j\le m
.\)
Here, each $a_{jk}^{(i)}$ can be computed efficiently using the constructive membership algorithm for an elementary abelian factor group, designed by  Ivanyos, Magniez and Santha \cite[proof of Theorem 10]{santha}. Each generator $g_i N$ corresponds to a matrix $A^{(i)}$ in $GL_m(p)$, the group of invertible $m\times m$ matrices over $\mathbb{Z}_p$, whose $(j,\ell)$-th entry is denoted by $a_{j\ell }^{(i)}$.
Thus, we obtain a set of matrices $\{A^{(1)}, A^{(2)}, \dots, A^{(s)}\}$ which generates an associative matrix algebra $\mathcal{A}$ over $\mathbb{Z}_p$. Note that $\soc{Syl_p((G/N)^{(i)})}$ can be viewed naturally as a vector space $V \cong \mathbb{Z}_p^m$ over $\mathbb{Z}_p$. Applying Rónyai's Las Vegas polynomial-time algorithm \cite{rony} (possibly multiple times) to $\mathcal{A}$, we can efficiently compute a common invariant subspace $W \subseteq V$ that is irreducible. We recall that $W$ is called \emph{invariant} if  $A^{(i)}W \subseteq W$ for all $i$; $W$  is called \emph{irreducible} if it has no proper nontrivial invariant subspace. This irreducible subspace corresponds to a minimal normal subgroup $M/N$ of $G/N$. Therefore, we obtain the following result.
\begin{theorem}
    Suppose $G$ is a solvable black-box group with unique encoding and $N\unlhd G$. Then there exists a polynomial-time quantum algorithm to compute a minimum generating set of $G/N$.
\end{theorem}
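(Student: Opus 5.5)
The plan is to prove the theorem by induction on the length of a chief series of $G/N$, that is, on the number of prime factors of $|G/N|$ counted with multiplicity. This number is at most $\log_2|G|\le n$ because the encoding is unique. Correctness and running time of Algorithm~\ref{fig:mingen} will both follow from this induction.

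\textbf{Correctness.} In the base cases, $G=N$ is detected by testing $G\subseteq N$ with Theorem~\ref{thm:watrous}. The case where $G/N$ is simple is detected by checking that $|G|/|N|$ is prime; then any $g\in G\setminus N$ generates $G/N$, and such a $g$ can be found among the input generators by membership tests in $N$.

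In the recursive step we need $M/N$ to be a genuine abelian minimal normal subgroup of $G/N$, so that \cref{thm: abelian lifting} applies. The argument is as follows.
\begin{enumerate}
\item By Lemma~\ref{lem:higher} the images of the $G^{(j)}$ give the derived series of $G/N$. Hence the last nontrivial term $(G/N)^{(i)}$ is abelian and characteristic.
\item Its Sylow $p$-subgroup $P$ and $\soc{P}$ are characteristic in $(G/N)^{(i)}$, hence normal in $G/N$. Also $\soc{P}$ is elementary abelian.
\item Subgroups of $\soc{P}$ that are normal in $G/N$ correspond exactly to subspaces of $V\cong\mathbb{Z}_p^m$ invariant under all the matrices $A^{(1)},\dots,A^{(s)}$. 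This holds because conjugation is linear on $V$ and normality need only be checked on generators of $G/N$.
\item An irreducible invariant subspace $W$ therefore yields a minimal normal subgroup $M/N$ of $G/N$. We take $M=\langle N, w_1,\dots\rangle$, where the $w$'s are preimages of a basis of $W$ obtained from the coordinate vectors via the $u_j$.
\end{enumerate}
Since $G/N$ is not simple, $M\neq G$, and $|G/M|<|G/N|$. The induction hypothesis then gives a minimum generating set $\{h_1M,\dots,h_dM\}$ of $G/M$ with $d=d(G/M)$. By \cref{thm: abelian lifting} applied to $G/N$ with minimal normal subgroup $M/N=\langle m_1N,\dots,m_\ell N\rangle$, the minimum generating set of $G/N$ is one of the following:
\begin{itemize}
\item $\{h_1N,\dots,h_dN\}$;
\item one of the $d\ell$ modified sets in which $h_i$ is replaced by $h_im_j$;
\item $\{h_1N,\dots,h_dN,xN\}$ for any $x\in M\setminus N$.
\end{itemize}
Each candidate $H=\langle\text{candidate},N\rangle$ is tested for equality with $G$ by comparing the orders $|H|$ and $|G|$ using Theorem~\ref{thm:watrous}. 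Here $H\le G$ is solvable and $N$ is included among its generators.

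\textbf{Running time.} The recursion depth is at most $n$. Each level does the following polynomial-time work:
\begin{itemize}
\item derived series computation (Theorem~\ref{Thm: Solvable derived});
\item abelian factor group decomposition (Watrous);
\item constructive membership in elementary abelian factor groups (\cite{santha}), to obtain the matrices;
\item R\'onyai's algorithm to find $W$;
\item $O(d\ell)\le O(n^2)$ order computations.
\end{itemize}
Generating sets are kept of size $O(n)$ by the remark following Theorem~\ref{thm:santha}, so inputs to recursive calls do not blow up. The error probabilities of the Monte Carlo and quantum subroutines are reduced by repetition to $1/\mathrm{poly}$ per call. A union bound over polynomially many calls then gives bounded total error.

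\textbf{Main obstacle.} The hardest step is finding $W$ with R\'onyai's algorithm. It finds a proper invariant subspace (or certifies irreducibility) of the $\mathcal{A}$-module $V$. We will iterate it: while the current invariant subspace is reducible, replace it by a proper invariant subspace, restricting the matrices to a basis of it. The dimension strictly drops each time, so at most $m\le n$ iterations suffice. Two points need care. First, the restriction must still be computed in polynomial time over $\mathbb{Z}_p$, which is routine linear algebra. Second, the prime $p$ may be exponentially large; this is handled because R\'onyai's method is polynomial in $\log p$.
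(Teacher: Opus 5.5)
Your proposal is correct and follows essentially the same route as the paper. Both recursively split off an abelian minimal normal subgroup $M/N$ of $G/N$, found via R\'onyai's algorithm as an irreducible invariant subspace of the socle of a Sylow subgroup of the last nontrivial derived term of $G/N$, and then lift a minimum generating set of $G/M$ by Theorem~\ref{thm: abelian lifting}, testing each candidate with Watrous's order algorithm. Your extra bookkeeping (recursion depth at most $n$, bounded generating-set size, iterating R\'onyai's algorithm with strictly decreasing dimension, error amplification) only makes explicit what the paper leaves implicit.
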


\section{Algorithms for Groups in $\Gamma_d$}\label{sec:Algorithm in gamma d}
In this section, we design quantum algorithms for performing several group theoretic tasks in factor groups of groups from $\Gamma_d$.


We begin this section by describing a procedure to prepare several copies of the uniform superposition state $\ket{H}=|H|^{-1/2}\sum_{h\in H}\ket{h}$, where $H$ is a subgroup of a black-box group $G$.

\subsection{Constructing \texorpdfstring{$\ket{H}$ for $H \in \Gamma_d$}{ket{H} for H in Gamma d}}\label{sec:superposition over H}

First we compute a composition series of $H$ using Theorem \ref{thm:santha}, say $
{1}=H_0\unlhd \cdots
\unlhd H_m=H,$
where each factor $H_j/H_{j-1}$ is simple. This construction requires the cardinality of each subgroup $H_j$, which can also be computed efficiently using Theorem \ref{thm:santha}.  

For each factor $H_j/H_{j-1}$, if it is abelian, we use the method of Watrous \cite{watrous} to construct copies of the uniform superposition $\ket{H_j}$ from copies of $H_{j-1}$. If $H_j/H_{j-1}$ is non-abelian, we first compute a left transversal of $H_{j-1}$ in $H_j$ using the technique of Furst, Hopcroft, and Luks \cite{luks}. The algorithm given by the authors in~\cite{luks} needs a membership test in $H_{j-1}$ and runs in time proportional to $[H_j:H_{j-1}]^2 .|T_j|$ times the runtime of a membership query to $H_{j-1}$, where $T_j$ is the generating set of $H_j$. Since every non-abelian simple factor $H_j/H_{j-1}$ is isomorphic to a subgroup of $S_d$, we have $|H_j/H_{j-1}|\leq d!$, and there is a polynomial-time quantum membership test in $ H_{j-1}$ by Theorem \ref{thm:santha}. Therefore, we can compute a left transversal of $H_{j-1}$ in $H_j$.

Let $x_1,\ldots,x_r$ be 
a left transversal of $H_{j-1}$ in $H_j$, and the inverse of each $x_i$ is also known. Suppose we have $\ell$ copies of $\ket{H_{j-1}}$ in registers $A_1,\dots,A_\ell$. We convert these $\ell$ copies of $\ket{H_{j-1}}$ into $\ell$ copies of $\ket{H_j}$ as follows. We first prepare $r^{-1/2}\sum_{k=1}^r \ket{k}_I\ket{0}_B\ket{H_{j-1}}_{A_1}$. Since the transversal elements are classically known, a reversible operator $U_f$ maps $\ket{k}_I\ket{0}_B\mapsto\ket{k}_I\ket{x_k}_B$. Applying $U_G$ to $B$ and $A_1$ gives $r^{-1/2}\sum_k \ket{k}_I\ket{x_k}_B\ket{x_kH_{j-1}}_{A_1}$. We then apply $U_f^\dagger$, which restores $B$ to $\ket{0}$ and leaves $r^{-1/2}\sum_k\ket{k}_I\ket{0}_B\ket{x_kH_{j-1}}_{A_1}$.

For each $c$ from $1$ to $r$: load $x_c^{-1}$ into register $B$ using $X$ gates; apply $U_G$, sending $A_1$ to $\ket{x_c^{-1}x_kH_{j-1}}$; test membership in $H_{j-1}$ coherently into a new register $F$; apply a bitwise XOR of the classical bit-string $c$ into register $I$ controlled on $F$; then uncompute the membership test, uncompute $U_G$, and clear $B$ by applying $X$ gates. Since $x_k$ is one of elements in the left transversal, $x_c^{-1}x_k\in H_{j-1}$ if and only if $c=k$. Thus, for each $k$, the controlled bitwise XOR  is performed exactly once (when $c=k$), mapping $\ket{k}_I\mapsto=\ket{k\oplus k}=\ket{0}_I$ while $A_1$ is returned to $\ket{x_kH_{j-1}}$ and $B,F$ to $\ket{0}$. After the loop, the state is $\ket{0}_I\ket{0}_B\ket{0}_F\otimes r^{-1/2}\sum_k\ket{x_kH_{j-1}}_{A_1}=\ket{0}\ket{0}\ket{0}\ket{H_j}_{A_1}$. The same procedure can be applied independently to each of the $\ell$ registers, producing $\ell$ copies of $\ket{H_j}$.

\emph{Error analysis}: Suppose first that all operations are error-free except for the membership oracle. At each level, we are doing $2\ell r$ membership queries, and the total number of membership queries is $O(\ell d!\log(|H|))$ since $r\le d!$. The membership-test error can be reduced to $\epsilon=2^{-\operatorname{poly}(n)}$ by polynomially many repetitions in each level. Since $d$ is a fixed constant, $\ell$ and the number of composition factors are polynomially bounded; the overall error is still negligible.  

\subsection{Algorithms for Factor Groups in \texorpdfstring{$ \Gamma_d$}{Gamma d}}
Let $G\in \Gamma_d$ and let $H$ be a normal subgroup of $G.$ There is an efficient quantum algorithm for constructive membership in an abelian factor group $G/H$, provided $H$ is a polynomial-size group or solvable group due to Ivanyos, Magniez, and Santha \cite[proof of Theorem 10]{santha}. One can check that the proof of their theorem works for $\Gamma_d$ if we can efficiently produce (several copies of) the uniform superposition $\ket{H}=|H|^{-1/2}\sum_{h\in H}\ket{h}$. Therefore, the results in Subsection \ref{sec:superposition over H} imply that we can design an efficient quantum algorithm for constructive membership in the abelian factor group $G/H$ in $\Gamma_d$. In addition, we obtain the following result. 

\begin{theorem} \label{thm: Factor-santha}
    Let $d$ be a fixed constant. Then there is a polynomial-time quantum algorithm that, on input a group $G \in \Gamma_d$ with unique encoding and a normal subgroup $N$ of $G$, can perform each of the tasks listed in Theorem \ref{thm: Ivanyos, Magniez and Santha} for the factor group $G/N$.
\end{theorem}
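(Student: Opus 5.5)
The plan is to treat $G/N$ as a black-box group with \emph{non-unique} encoding and invoke Theorem~\ref{thm: Ivanyos, Magniez and Santha} for it. Elements of $G/N$ are encoded by elements $g\in G$, and the oracles $U_G$ and $U_{G^{-1}}$ serve as multiplication and inversion in $G/N$. Two facts make this admissible. First, $G/N\in\Gamma_d$, since every composition factor of $G/N$ is a composition factor of $G$; hence $\nu(G/N)\le d$. Second, the running time in Theorem~\ref{thm: Ivanyos, Magniez and Santha} is polynomial in the input plus $\nu(G/N)\le d$, which is polynomial because $d$ is fixed. What remains is to supply the two oracles (a) and (b) required by Theorem~\ref{thm: Ivanyos, Magniez and Santha}, together with an identity test in $G/N$ (deciding whether $g\in N$). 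Ivanyos, Magniez and Santha assume such an identity test in the non-unique-encoding model.

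The identity test is immediate. $N$ is a subgroup of $G\in\Gamma_d$ with unique encoding, so $N\in\Gamma_d$, and Theorem~\ref{thm:santha} gives a polynomial-time quantum membership test in $N$.

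For oracle (a), the order of $gN$, I would use Shor-style period finding. First compute $|G|$ (or $\ord{g}$) by Theorem~\ref{thm:santha}; this bounds the period. The map $k\mapsto g^kN$ is periodic with period equal to the order of $gN$, but it is not injective on strings. So I would encode the coset $g^kN$ by the quantum state $\ket{g^kN}=|N|^{-1/2}\sum_{h\in N}\ket{g^kh}$. This state is obtained by applying the power of $g$ (controlled on $k$, via repeated squaring and $U_G$) to a copy of $\ket{N}$. Distinct cosets give orthogonal states and equal cosets give identical states, so the standard hidden-subgroup Fourier sampling over $\mathbb{Z}_M$ with $M$ a multiple of $|G|$ recovers the period. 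Alternatively, since $\ord{g}$ is known and factorable over the primes dividing $|G|$, one could test $g^{\ord{g}/q}\in N$ repeatedly for each prime $q$, using only the membership test. This second route is simpler, and I would adopt it, keeping the superposition route for the next oracle. The copies of $\ket{N}$ come from Subsection~\ref{sec:superposition over H}, since $N\in\Gamma_d$.

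For oracle (b), constructive membership in an elementary abelian subgroup $E/N$ of $G/N$, I would follow the proof of Theorem 10 of \cite{santha}. There, constructive membership in an abelian factor group is reduced to the abelian hidden subgroup problem on $\mathbb{Z}_p^{k+1}$ via the function $(a_1,\dots,a_k,b)\mapsto \ket{u_1^{a_1}\cdots u_k^{a_k}x^{b}N}$, using coset superpositions. The only ingredient that proof needs about $N$ is the ability to prepare polynomially many copies of $\ket{N}$, and Subsection~\ref{sec:superposition over H} supplies exactly this for $N\in\Gamma_d$. I expect this step to be the main obstacle. One has to verify carefully that every use of $N$ in that proof, beyond solvability or small size, goes only through $\ket{N}$ and membership. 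One also has to check that the superposition states are produced with error $2^{-\mathrm{poly}(n)}$, so that the polynomially many oracle calls made by Theorem~\ref{thm: Ivanyos, Magniez and Santha} accumulate negligible total error. This error bound would follow from the error analysis in Subsection~\ref{sec:superposition over H} combined with a union bound. Having supplied (a), (b) and the identity test, Theorem~\ref{thm: Ivanyos, Magniez and Santha} yields tasks (i)--(v) for $G/N$ in polynomial time.
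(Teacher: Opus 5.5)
Your proposal is correct and follows essentially the same route as the paper. The paper also applies Theorem~\ref{thm: Ivanyos, Magniez and Santha} to $G/N$. It obtains oracle (b) by rerunning the proof of Theorem 10 of \cite{santha} with copies of $\ket{N}$ from Subsection~\ref{sec:superposition over H}, and oracle (a) by stripping prime factors from $|G|$ with membership tests $g^{m/p_i}\in N$ via Theorem~\ref{thm:santha}. Your explicit identity test in $G/N$, the remark $\nu(G/N)\le d$, and starting the prime-stripping from $\ord{g}$ instead of $|G|$ are minor refinements of points the paper leaves implicit.
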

\begin{proof} To apply Theorem \ref{thm: Ivanyos, Magniez and Santha}, we need to show that the tasks in (a) and (b) can be solved efficiently using a quantum algorithm. As discussed above, we have a quantum algorithm for task (b), that is, constructive membership in the abelian subgroups of \(G/N\). We now show that task (a) can also be solved efficiently by a quantum algorithm. Observe that $G$ has a unique encoding, and by Theorem \ref{thm:santha}, we can compute $|G|$. Suppose $|G|=m= p_1^{\alpha_1}p_2^{\alpha_2}\ldots p_r^{\alpha_r} $ for some distinct primes \(p_i\) and $\alpha_i\ge 0$. To compute the order of \(gN \in G/N\), note that \(\ord{gN}\) divides $|G|$. Moreover, if \(g^{m/p_i}\in N\), then \(\ord{gN}\) divides \(m/p_i\). Starting with $m$ and testing whether $g^{m/p_i}\in N$, we can repeatedly obtain either a smaller multiple of the order or the order. Since $N$ is closed under $\Gamma_d$, and thus membership in $N$ can be tested efficiently using Theorem \ref{thm:santha}, we keep on repeating this process until we obtain the smallest integer $\alpha'$ such that $g^{\alpha'}\in N$. Since each iteration removes one prime factor of $G$, this procedure requires at most $O(\log(|G|)$ iterations with a membership test in $N$. Hence, using a quantum algorithm, we can compute $\ord{gN}$ in polynomial time.

Since we have constructed both the oracles (a) and (b), the theorem is proved. 
We note that this algorithm runs in time polynomial in $d!$ and the input size, where the $d!$ comes from the use of oracle (b). But this is fine as $d$ is a constant. 
\end{proof}

In the following theorem, we show that there is an efficient quantum algorithm to decompose an abelian factor group $G/N$ into a direct product of cyclic groups, where $G\in \Gamma_d$.

\begin{theorem} \label{thm: abelian factor decompose}
Let $d$ be a fixed constant. Let $G$ be a uniquely encoded black-box group in $\Gamma_d$, and let $N$ be a normal subgroup of $G$. If $G/N$ is abelian, then there is a polynomial-time quantum algorithm that decomposes $G/N$ into a direct product of cyclic groups.
\end{theorem}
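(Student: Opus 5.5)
We follow the approach of Watrous's decomposition of abelian factor groups (Section 4.2 of \cite{watrous}) and Cheung--Mosca \cite{mosca}, i.e.\ reduce to the hidden subgroup problem over a finite abelian group $\mathbb{Z}_{q_1}\times\cdots\times\mathbb{Z}_{q_s}$. The only place where solvability of $G$ is used in Watrous's argument is the preparation of copies of the coset states $\ket{gN}$ and the computation of element orders in $G/N$. For $G\in\Gamma_d$ both are now available. Copies of $\ket{N}$ can be produced by the procedure of Subsection~\ref{sec:superposition over H}, since $N\le G$ lies in $\Gamma_d$ and has unique encoding. The order of any $gN$ is computed by the argument in the proof of Theorem~\ref{thm: Factor-santha}.

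\textbf{Step 1.} Let $G/N=\langle g_1N,\dots,g_sN\rangle$, with $s=O(n)$ after the size-reduction remark. Compute $q_i=\ord{g_iN}$ by Theorem~\ref{thm: Factor-santha}.

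\textbf{Step 2.} Define $f:\mathbb{Z}_{q_1}\times\cdots\times\mathbb{Z}_{q_s}\to\{\text{states}\}$ by
\[
f(a_1,\dots,a_s)=\ket{g_1^{a_1}\cdots g_s^{a_s}N}.
\]
Because $G/N$ is abelian, $f$ is constant on cosets of the relation lattice $K=\{a: \prod g_i^{a_i}\in N\}$ and takes orthogonal values on distinct cosets, since distinct cosets of $N$ are disjoint. We realize $f$ quantumly by preparing $\ket{N}$ and multiplying coherently by $g_1^{a_1}\cdots g_s^{a_s}$ using $U_G$ with repeated squaring. The standard abelian HSP procedure (Fourier sampling, as in \cite{mosca,watrous}) works with such quantum-state-valued hiding functions, because only orthogonality of the values on distinct cosets is needed. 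Each Fourier sample consumes one copy of $\ket{N}$. Polynomially many samples yield generators of $K$ with high probability.

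\textbf{Step 3.} Then $G/N\cong \mathbb{Z}_{q_1}\times\cdots\times\mathbb{Z}_{q_s}/K$. Compute the Smith normal form of the relation matrix, extended by the rows $q_ie_i$, in classical polynomial time. This gives a unimodular change of basis, and hence new elements $y_j=\prod_i g_i^{c_{ji}}$ with $G/N=\langle y_1N\rangle\times\cdots\times\langle y_tN\rangle$. Splitting each cyclic factor into prime-power parts by the Chinese remainder theorem, using the factorization of $|G|$ already obtained, gives prime-power cyclic factors. Correctness can be double-checked by comparing $\prod \ord{y_jN}$ with $|G|/|N|$, both computable by Theorem~\ref{thm:santha}.

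\textbf{Main obstacle.} The main obstacle is Step 2: justifying that the coset-state oracle is a valid hiding function and that the copies of $\ket{N}$ are produced with small enough error. Each sample destroys a copy, so we must produce polynomially many copies. We must also check that the accumulated errors from the membership tests inside the construction of $\ket{N}$ stay negligible, using the error analysis of Subsection~\ref{sec:superposition over H}. The runtime is polynomial in the input and $d!$, which is constant.
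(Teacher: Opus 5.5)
Your proposal is correct and follows the paper's route. Like the paper, you run Watrous's abelian factor-group decomposition, supplying copies of $\ket{N}$ from Subsection~\ref{sec:superposition over H} together with element orders, and then finish classically from the relation lattice. The only difference is minor: the paper takes orders of the generators in $G$ via Shor's algorithm and works over $\mathbb{Z}_L^t$ (as in Appendix~\ref{sec:A1}), whereas you use the quotient orders $\ord{g_iN}$ from Theorem~\ref{thm: Factor-santha}, which is equally valid.
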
 
\begin{proof} To decompose the abelian quotient group $G/N$, we note that Watrous's approach \cite{watrous} for factoring abelian groups works if we have several copies of $\ket{N}$, in addition to the order of each generator $g \in A$. Because $G$ is uniquely encoded, we can compute the order of any $g$ using Shor's algorithm and efficiently prepare multiple copies of $\ket{N}$ as described in Subsection \ref{sec:superposition over H}. 
\end{proof}

\subsection {Computing a Chief Series in \texorpdfstring{$\Gamma_d$}{Gammad}}
Suppose $G$ is a black-box group with unique encoding in $\Gamma_d$, and let $N\unlhd G$. For a given composition series of $G,$ we can compute a composition series of $G/N$ in the following way. 

By Theorem \ref{thm:santha}, we can compute a composition series of $G$, say ${1}=G_0\unlhd \cdots\unlhd G_m=G$. Applying Theorem \ref{thm: Factor-santha}, and deleting consecutive duplicates from
$1=(G_0N)/{N}\unlhd \cdots \unlhd (G_mN)/N=(G)/{N}$ yields a composition series of \(G/N\). This observation will be used in our recursive algorithm for computing a chief series of \(G\) in \(\Gamma_d\).
Further, if \(i_0\) is the smallest index such that $G_{i_0}N/N \ne 1$, then $G_{i_0}N/N $ is a simple group. Therefore, $H/N$ is a subnormal simple subgroup of $G/N$, where $H = G_{i_0}N$. 
\begin{theorem}
    Let $d$  be a fixed constant and $G$ be a black-box group in $\Gamma_d$ with unique encoding. Then, there exists a polynomial-time quantum algorithm to compute a chief series of $G$.
\end{theorem}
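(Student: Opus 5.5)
The plan is to build the chief series bottom-up by recursion on factor groups. Maintain a normal subgroup $N\unlhd G$ (initially $N=1$) for which a chain $1=N_0\lhd N_1\lhd\cdots\lhd N_t=N$ of normal subgroups of $G$ with chief factors $N_i/N_{i-1}$ has already been found. While $N\neq G$ (equality is tested by comparing orders via Theorem~\ref{thm:santha}), compute a minimal normal subgroup $M/N$ of $G/N$, append $M$ to the chain, and set $N:=M$. Each step strictly increases $|N|$, so there are at most $\log|G|\le n$ iterations. Correctness follows directly from the definition of a chief series, since each $M/N$ is minimal normal in $G/N$. Thus the whole task reduces to finding a minimal normal subgroup of $G/N$ in quantum polynomial time.

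To find a minimal normal subgroup, I would use the observation preceding the theorem. Compute a composition series of $G$ by Theorem~\ref{thm:santha}. Find the smallest index $i_0$ with $G_{i_0}\not\subseteq N$; this needs only membership tests in $N$, which are available because $N\in\Gamma_d$. Set $H=G_{i_0}N$, so that $T=H/N$ is a simple subnormal subgroup of $G/N$. Now compute the normal closure $K/N=\langle T^{G/N}\rangle$. Its generators are the conjugates of the generators of $H$ by the generators of $G$, closed up in the standard way, with subgroup equality checked via orders from Theorem~\ref{thm: Factor-santha}. Then split into two cases using Lemma~\ref{issac(subnormal)}.

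\emph{Non-abelian case.} If $T$ is non-abelian, which is detected by checking commutators of its generators modulo $N$, then $K/N$ is already minimal normal. Take $M=K$.

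\emph{Abelian case.} If $T\cong\mathbb{Z}_p$, then $K/N$ is a normal $p$-subgroup of $G/N$, and I would follow the argument of Section~\ref{section;Minimum Generating Set of  Solvable Group}. Compute the derived series of $K/N$ via Theorem~\ref{Thm: Solvable derived} and Lemma~\ref{lem:higher}, and take its last nontrivial term $A/N$, which is abelian and characteristic in $K/N$, hence normal in $G/N$. Decompose $A/N$ into cyclic factors using Theorem~\ref{thm: abelian factor decompose}. Pass to the socle, which is an elementary abelian $p$-group normal in $G/N$, by taking the appropriate $p$-powers of the cyclic generators. Express the conjugation action of the generators of $G$ on this socle as matrices over $\mathbb{Z}_p$, using constructive membership in elementary abelian factor groups (Theorem~\ref{thm: Factor-santha}). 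Finally, apply Rónyai's algorithm to extract an irreducible invariant subspace, which gives a minimal normal subgroup $M/N$.

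I expect the main obstacle to be the abelian case. Specifically, all the sub-algorithms (order computation, derived series, cyclic decomposition, constructive membership) must run in the quotient $G/N$ where $G\in\Gamma_d$ is not solvable. This is exactly where the earlier results of this section are essential: $\ket{N}$-state preparation, Theorem~\ref{thm: Factor-santha}, and Theorem~\ref{thm: abelian factor decompose}. A second point to check is that the generating sets stay of size $O(n)$ (by the remark on Monte Carlo pruning) and that the failure probabilities over the at most $n$ iterations can be amplified so that their sum remains negligible.
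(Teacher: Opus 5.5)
Your proposal is correct and follows essentially the same route as the paper: you build the series by repeatedly finding a minimal normal subgroup of $G/N$. You obtain a simple subnormal subgroup from the image of a composition series and take its normal closure; in the abelian case you take the socle of the last nontrivial derived term of the $p$-group $K/N$ and extract an irreducible submodule with R\'onyai's algorithm. Your uniform use of the derived series of $K/N$ (rather than of $G$) matches the paper's inductive step. It also avoids a flaw in the paper's base case, which takes the derived series of $G$ itself; that only ends in an abelian term when $G$ is solvable.
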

\begin{proof} 
The algorithm for finding a chief series is recursive. In the base case, we first find a subnormal simple subgroup $H$  of $G$ as described above. We then find the normal closure of $H$ in $G$, denoted by $\langle H^G\rangle$, using the algorithm by Babai, Cooperman, Finkelstein, Luks, and Seress \cite{fast}. If $H$ is non-abelian, then $\langle H^G\rangle$ is a minimal normal subgroup of $G$; otherwise $\langle H^G\rangle$ is a $p$-group by Lemma \ref{issac(subnormal)}. We compute a derived series of $G$ using Theorem \ref{Thm: Solvable derived}, and take the highest non-trivial commutator subgroup, say $X$. We know $X$ is an abelian and normal subgroup of $G$. We decompose $X$ into a product of cyclic groups, each of which is a cyclic $ p$-group, using Theorem \ref{thm: abelian factor decompose}. After this, the computation of $\soc{X}$ and a minimal normal subgroup of $G$ follows the same process described in Section \ref{section;Minimum Generating Set of  Solvable Group}, in which a minimal normal subgroup is computed by finding an irreducible subspace of the socle (which can be viewed as a vector space) under the action of a matrix algebra. The generator of the algebra can be computed using the constructive membership testing from  Theorem~\ref{thm: Factor-santha}. 
Note that in Section \ref{section;Minimum Generating Set of  Solvable Group}, we do not use the solvability of $G$ (more precisely, $G/N$) once we obtain the socle.

 Suppose we have inductively computed a normal series \(1=N_0\triangleleft N_1 \triangleleft \cdots \triangleleft N_k\triangleleft G\) such that $N_{i+1}/N_{i}$ is a minimal normal subgroup of $G/N_i$ for $i = 0,\ldots,k-1$. If $G/N_k$ is simple, then we obtain a chief series of $G$. Otherwise, we describe a method to refine the normal series further.  To do this, we compute a minimal normal subgroup $N_{k+1}/N_{k}$ of $G/N_k$. First, we compute a subnormal simple subgroup $H/N_k$ of $G/N_k$ as described earlier, and then compute $\langle (H/N_k)^{G/N_k}\rangle$ using the algorithm of  Babai, Cooperman, Finkelstein, Luks, and Seress \cite{fast}. If $H/N_k$ is non-abelian, then we can set $N_{k+1}/N_k =\langle (H/N_k)^{G/N_k}\rangle$. Otherwise, $\langle (H/N_k)^{G/N_k}\rangle$ is a $p$-group for some prime $p$. We compute a derived series of $\langle (H/N_k)^{G/N_k}\rangle$ using an algorithm of Babai, Cooperman, Finkelstein, Luks, and Seress \cite{fast}. Observe that the highest non-trivial commutator subgroup in the derived series is abelian and of the form $X/N_k$. Therefore, we can decompose $X/N_k$ using Theorem \ref{thm: abelian factor decompose} obtaining say, $X/N_k = \langle x_1N_K\rangle\times \ldots\times\langle x_kN_k\rangle$. Suppose the order of $x_iN_k$ is $p^{r_i}$, which is efficiently computable. Then,  $\soc{X/N_k} = \langle y_1\rangle\times \ldots\times\langle y_k\rangle$, where $y_i = x_i^{p^{r_i-1}}N_k$ if $r_i>1$, otherwise $y_i = x_iN_k$. 

 Note that $\soc{X/N_k}$ is an elementary abelian $p$-group and therefore can be regarded as a vector space $V$. Consider the conjugation action of $G/N_k$ on $\soc{X/N_k}$. Since constructive membership can be performed in factor groups by Theorem \ref{thm: Factor-santha}, we can apply the same procedure as in Section \ref{section;Minimum Generating Set of  Solvable Group} to compute an irreducible subspace of $V$. This irreducible subspace of $V$ corresponds to a minimal normal subgroup $N_{k+1}/N_k$ of $G/N_k$.
\end{proof}

\section{Finding Minimum Generating Set for the Group Class \texorpdfstring{$\Gamma_d$}{Gammad}}


With the subroutines prepared in Section~\ref{sec:Algorithm in gamma d}, we are now ready to describe a quantum algorithm for $\mingen$ for groups in $\Gamma_d$, where $d$ is a fixed constant. The following important result regarding the minimum generating set gives us a clue for finding a minimum generating set of a group $G$ in $\Gamma_d$.

\begin{theorem}[Gasch\"utz~\cite{GaschutzWolfgang}, Lucchini~\cite{Lucchini1995273}]\label{thm:lift}
Let $\frak{N}$ be a minimal normal subgroup of $\mathfrak{G}$ with $d(\frak{G}/\frak{N})=\ell$. Let $g_1,g_2,\ldots,g_{\ell}\in \frak{G}$ be such that $\frak{G}/\frak{N} = \langle g_1\frak{N},\cdots,g_{\ell}\frak{N}\rangle$. Then $\ell\le d(\frak{G})\le \ell+1$. Moreover, if $d(\frak{G}) = \ell,$ then there exist $n_1,\cdots,n_{\ell}\in \frak{N}$ such that $\frak{G}=\langle g_1n_1,\cdots ,g_{\ell}n_{\ell}\rangle$. Otherwise, if $d(\frak{G}) = \ell+1,$ then there exist $n_1,\cdots,n_\ell, n_{\ell+1}\in \frak{N}$ such that $\frak{G}=\langle g_1n_1,\cdots ,g_{\ell}n_\ell, n_{\ell+1}\rangle$.
\end{theorem}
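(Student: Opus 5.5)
The plan is to get everything except the upper bound from Gasch\"utz's lifting lemma. That lemma says: if $\mathfrak N\unlhd\mathfrak G$, $k\ge d(\mathfrak G)$, and $\mathfrak G/\mathfrak N=\langle x_1\mathfrak N,\dots,x_k\mathfrak N\rangle$, then there are $n_1,\dots,n_k\in\mathfrak N$ with $\mathfrak G=\langle x_1n_1,\dots,x_kn_k\rangle$. I would prove it by Gasch\"utz's counting argument. For a fixed generating $k$-tuple of $\mathfrak G/\mathfrak N$, let $\phi$ be the number of its lifts, i.e.\ tuples $(x_in_i)$ that generate $\mathfrak G$. One shows by induction over subgroups, using M\"obius inversion on the subgroup lattice (equivalently, the $\mathfrak N$-coset count of $k$-tuples generating subgroups $H$ with $H\mathfrak N=\mathfrak G$), that $\phi$ does not depend on the chosen tuple. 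Summing over all generating tuples of the quotient gives $\phi\cdot\phi_{\mathfrak G/\mathfrak N}(k)=\phi_{\mathfrak G}(k)$, where $\phi_X(k)$ is the number of generating $k$-tuples of $X$. Since $k\ge d(\mathfrak G)$ we have $\phi_{\mathfrak G}(k)>0$, hence $\phi>0$.

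With the lemma, the structural claims are short. The lower bound $\ell\le d(\mathfrak G)$ holds because images of generators generate the quotient. If $d(\mathfrak G)=\ell$, apply the lemma with $k=\ell$ to $(g_1,\dots,g_\ell)$. If $d(\mathfrak G)=k>\ell$, apply it to the padded tuple $(g_1,\dots,g_\ell,1,\dots,1)$ of length $k$, which still generates $\mathfrak G/\mathfrak N$. This gives $\mathfrak G=\langle g_1n_1,\dots,g_\ell n_\ell,n_{\ell+1},\dots,n_k\rangle$. So once $k\le\ell+1$ is known, the form stated in the theorem follows.

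It remains to show $d(\mathfrak G)\le\ell+1$, and I would split into two cases.
\begin{itemize}
\item \emph{$\mathfrak N$ abelian.} Pick any $1\ne n\in\mathfrak N$ and set $H=\langle g_1,\dots,g_\ell,n\rangle$. Then $H\mathfrak N=\mathfrak G$. The subgroup $H\cap\mathfrak N$ is normalized by $H$, and also by $\mathfrak N$ because $\mathfrak N$ is abelian, so it is normal in $H\mathfrak N=\mathfrak G$. It is nontrivial since it contains $n$, so minimality gives $H\cap\mathfrak N=\mathfrak N$. Hence $H=\mathfrak G$, which also recovers Theorem~\ref{thm: abelian lifting}(ii).
\item \emph{$\mathfrak N\cong T^r$ nonabelian.} Here the intersection argument breaks down, because $H\cap\mathfrak N$ need not be $\mathfrak N$-invariant. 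I would instead invoke the Lucchini--Menegazzo bound $d(\mathfrak G)\le\max(2,d(\mathfrak G/\mathfrak N)+1)$. To prove it directly, I would aim to show that a suitable choice of $n_1,\dots,n_{\ell+1}$ makes $\langle g_in_i,n_{\ell+1}\rangle$ project onto every simple factor $T$ and avoid every diagonal-type maximal subgroup. That requires the fact that nonabelian simple groups are $2$-generated, which depends on the classification, together with counting maximal subgroups supplementing $\mathfrak N$.
\end{itemize}

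I expect the nonabelian upper bound to be the main obstacle, and in this paper I would simply cite~\cite{Lucchini1995273} for it. I would also record the degenerate case $\ell=0$, where $\mathfrak G=\mathfrak N$ is simple. If $\mathfrak N$ is nonabelian this forces $d(\mathfrak G)=2$, so the bound $\ell+1$ fails there; the theorem should be read with $\ell\ge1$, or with that case handled separately. That case never arises nontrivially in the recursion, since the base case handles simple quotients directly.
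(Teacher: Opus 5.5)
Your proposal is correct and matches what the paper relies on. The paper gives no proof of its own, only the citations, and the theorem is exactly Gasch\"utz's lifting lemma (your counting argument, applied to the original tuple or to the tuple padded with $1$'s) combined with Lucchini's bound $d(\mathfrak G)\le\max(2,d(\mathfrak G/\mathfrak N)+1)$ from \cite{Lucchini1995273}; your intersection argument simply handles the abelian case directly. Your caveat is also right: for $\ell=0$ with $\mathfrak G$ nonabelian simple, the stated bound $d(\mathfrak G)\le\ell+1$ fails, so the statement needs $\ell\ge1$ (or the bound $\max(2,\ell+1)$), which does not affect the paper because simple quotients are treated in the base case.
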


The above theorem gives a strategy to compute a minimum generating set of $\frak{G}$ from a minimum generating set of $\frak{G}/\frak{N}$: Try all possible choices of $(n_1,\ldots, n_d)\in \frak{N}^d$ (or $(n_1,\ldots, n_d,n_{d+1})\in \frak{N}^{d+1}$). However, this process is too expensive. Instead, we use the method of Lucchini and Thakkar \cite[Proposition 9 and the proof of Corollary 13]{LD}. Holt and Tracey \cite[Lemma 7]{derek} proved that the method of Lucchini and Thakkar, called the ``lifting step'', succeeds with high probability. We state Holt and Tracey’s theorem below. Note that we are not using the exact version of Holt and Tracey’s theorem.

\begin{lemma}[Lemma 7, Holt and Tracey~\cite{derek}]\label{lemma: derek and tracey}
Let $1=N_0\triangleleft  N_1\triangleleft  \cdots \triangleleft  N_u=G$
be a chief series of $G$. Suppose that $d(G/N_k)= \ell $ and $G/N_k=\langle g_1N_k,\ldots,g_\ell N_k\rangle$ for some $1\leq k< u-1$.
If $d(G/N_{k-1})=\ell$, then for uniformly random and independent choices 
$n_1,\ldots,n_\ell\in N_k$,\[\Pr\left[G/N_{k-1}=\left\langle g_1n_1N_{k-1},\ldots,g_\ell n_\ell N_{k-1}\right\rangle\right]\geq  53/(90\log |G|).\]
Otherwise, if $d(G/N_k)=\ell+1$, then for independent, uniformly random $n_1, \dots, n_{\ell+1} \in N_k$,
\[ \Pr\left[ G/N_{k-1} =\left\langle g_1n_1N_{k-1},\ldots,g_\ell n_\ell N_{k-1},n_{\ell+1}N_{k-1}\right\rangle\right]\geq 53/(90\log |G|).\]
\end{lemma}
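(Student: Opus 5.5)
We may pass to the quotient $\bar G=G/N_{k-1}$. Then $\bar N=N_k/N_{k-1}$ is a minimal normal subgroup of $\bar G$ and $\bar G/\bar N\cong G/N_k=\langle g_1N_k,\dots,g_\ell N_k\rangle$. A uniform $n\in N_k$ projects to a uniform element of $\bar N$, so it suffices to prove the following. For a minimal normal subgroup $\bar N$ of $\bar G$ and a fixed generating tuple of $\bar G/\bar N$ of size $\delta=d(\bar G/\bar N)$, a uniformly random lift $(g_1n_1,\dots,g_\delta n_\delta)$ (with an extra random $n_{\delta+1}$ when $d(\bar G)=\delta+1$) generates $\bar G$ with probability at least $53/(90\log|G|)$. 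We split into the abelian and non-abelian cases for $\bar N$.

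\emph{Abelian case.} Here $\bar N$ is an irreducible $\mathbb F_p\bar G$-module. If $\bar N$ is central (trivial module), the lift generates $\bar G$ exactly when the generated subgroup is not a complement-type subgroup. We count via Gaschütz's formula. The number of lifts generating $\bar G$ equals $|\bar N|^{\delta}-|\mathrm{Der}(\bar G/\bar N,\bar N)|\cdot(\text{complement count})$, which simplifies to a product of factors $(1-q^{r-\delta+i}/|\bar N|)$, where $q=|\mathrm{End}_{\bar G}(\bar N)|$ and $r$ counts complemented chief factors $\bar G$-isomorphic to $\bar N$. The hypothesis $d(\bar G)=\delta$ forces every factor to be positive. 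Each factor is then at least $1-1/q\ge 1/2$, unless the number of factors is large. We would bound the product by $\prod_{i\ge1}(1-2^{-i})>0.28$, which already exceeds $53/(90\log|G|)$.

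\emph{Non-abelian case.} Here $\bar N=S^t$ with $S$ non-abelian simple. We use Lucchini's / Detomi–Lucchini's estimate. The probability $P_{\bar G,\bar N}(\delta)$ that a random lift generates is given by a Dirichlet-type series over the crown $L$ (a monolithic primitive group with socle $\bar N$). Applying the Jaikin-Zapirain–Pyber / Lucchini–Morini bounds on the number of maximal subgroups of $L$ supplementing $\bar N$ bounds the failure probability by roughly $1-c/\log|\bar N|$ (or better). The constant $53/90$ would come from optimizing these bounds, for instance through $\sum_{n}\,m_n(L)/n^{\delta}$ together with $d(\bar G)=\delta\ge 2$ for non-cyclic groups. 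Since $\log|\bar N|\le\log|G|$, the claimed bound follows.

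The case $d(\bar G)=\delta+1$ is handled identically with $\delta+1$ random elements. There, the first $\delta$ components are lifts and the last is a random element of $\bar N$. The same counting applies with exponent $\delta+1$, which only increases the probability.

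The main obstacle is the non-abelian case: extracting an explicit constant like $53/90$ requires delicate estimates on maximal subgroups of monolithic primitive groups, and relies on CFSG-based results. I would import these from Holt–Tracey's argument rather than rederive them.
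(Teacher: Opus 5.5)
\textbf{There is a genuine gap, in the non-abelian case.}

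Your reduction to $\bar G=G/N_{k-1}$ is fine. The conclusion of your abelian case is also right, although the counting is garbled. If $\bar N$ is abelian, then $H=\langle g_1n_1,\dots,g_\delta n_\delta\rangle$ satisfies $H\cap\bar N\in\{1,\bar N\}$. So the non-generating lifts are in bijection with the complements of $\bar N$, and the success probability is simply $1-c/|\bar N|^{\delta}$, where $c$ is the number of complements: either $0$ or $|Z^1(\bar G/\bar N,\bar N)|$. No product appears; that shape belongs to crown-based powers. Since $c$ is $0$ or a power of $p$, and $d(\bar G)=\delta$ forces $c<|\bar N|^\delta$, the probability is at least $1-1/p\ge 1/2$.

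\textbf{The false step.} You plan to bound the non-abelian success probability below by something of order $c/\log|\bar N|$ and then use $\log|\bar N|\le\log|G|$. That intermediate bound is false. Take $\bar G=A_5^{19}$, which is $2$-generated (P.~Hall), and let $\bar N$ be one direct factor, so $\bar G/\bar N\cong A_5^{18}$ and $\ell=2$.
\begin{itemize}
\item A lift of a generating pair of $A_5^{18}$ generates $\bar G$ iff its $\bar N$-coordinate pair generates $A_5$ and is not $\mathrm{Aut}(A_5)$-equivalent to any of the other $18$ coordinate pairs.
\item The $2280$ generating pairs of $A_5$ form $19$ orbits of size $120$, and the other coordinates already occupy $18$ of them.
\item So the success probability is $120/3600=1/30<53/(90\log 60)$.
\end{itemize}
More generally, take $A_5^{k}$ with $k$ maximal such that $d(A_5^k)=t$. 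The probability becomes $120/60^{t}$ while $|\bar N|=60$ stays fixed, so no constant $c$ works.

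\textbf{The right parameter and the missing ingredient.} The probability is governed by $\delta_{\bar G}(\bar N)$, the number of chief factors $\bar G$-equivalent to $\bar N$. Counting maximal subgroups of the monolithic group $L=\bar G/C_{\bar G}(\bar N)$ cannot detect this parameter. What you need is the crown formula of Detomi--Lucchini:
\[
P_{\bar G,\bar N}(t)=P_{L,\bar N}(t)-(\delta_{\bar G}(\bar N)-1)\gamma/|\bar N|^{t},\qquad \gamma=|C_{\mathrm{Aut}\bar N}(L/\bar N)|.
\]
That centralizer acts freely on the generating lifts in $L$, so $|\bar N|^tP_{L,\bar N}(t)=m\gamma$ for an integer $m$. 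Positivity of $P_{\bar G,\bar N}(t)$ then forces $m\ge\delta_{\bar G}(\bar N)$. Hence
\[
P_{\bar G,\bar N}(t)\ge P_{L,\bar N}(t)/\delta_{\bar G}(\bar N)\ge 53/(90\,\delta_{\bar G}(\bar N)),
\]
where $53/90$ is the lower bound for lifting in monolithic groups with non-abelian socle.

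\textbf{What the paper does.} This $53/(90\,\delta)$ bound is exactly what the paper imports from Holt--Tracey. It then finishes with $\delta_{\bar G}(\bar N)\le\log|G|$, since a chief series has length at most $\log|G|$. That is the inequality your argument should use in place of $\log|\bar N|\le\log|G|$. The same bound with $t=\ell+1$ handles the second case, because the number of generating lifts does not depend on the chosen generating tuple of $\bar G/\bar N$.
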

The proof of the above Lemma \ref{lemma: derek and tracey} can be obtained by a slight modification of the proof given by Holt and Tracey \cite{derek}, as described in Appendix~\ref{sec:A2}.

\begin{theorem}
    Let $d$ be a fixed constant. Suppose $G$ is a black-box group with unique encoding in $\Gamma_d$. Then there exists a polynomial-time quantum algorithm to compute a minimum generating set of $G$.
\end{theorem}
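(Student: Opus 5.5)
We will compute a chief series of $G$ and then lift minimum generating sets from the top of the series down to the bottom, one chief factor at a time.

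\textbf{Chief series and top factor.} By the chief series theorem of the previous section, we first compute, in quantum polynomial time, a chief series $1=N_0\triangleleft N_1\triangleleft\cdots\triangleleft N_u=G$. Each $N_k$ is given by an $O(n)$-size generating set, and $u\le\log|G|\le n$. The top factor $G/N_{u-1}$ is a chief factor of $G$, hence characteristically simple, and lies in $\Gamma_d$.
\begin{itemize}
\item If it is abelian, it is elementary abelian. Its rank $r$ is read off from the direct product decomposition of Theorem~\ref{thm: abelian factor decompose}, and the resulting basis is a minimum generating set.
\item If it is non-abelian, it is $T^k$ for a simple $T\le S_d$. Since $|T|\le d!$ is constant, the factor is generated by $2$ elements.
\end{itemize}
More robustly, we can treat the top step exactly like the lifting steps below, starting from $G/N_u$ with $\ell=0$. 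This uniform treatment is what we will do.

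\textbf{Lifting step.} Suppose inductively that we hold $g_1,\dots,g_\ell$ with $G/N_k=\langle g_iN_k\rangle$ and $\ell=d(G/N_k)$. By Theorem~\ref{thm:lift}, $d(G/N_{k-1})\in\{\ell,\ell+1\}$, and some lift of the stated form works. We repeat the following $O(\log^2|G|)$ times.
\begin{itemize}
\item Sample independent, nearly uniform $n_1,\dots,n_\ell\in N_k$ by Theorem~\ref{thm:reachability}(ii).
\item Test whether $\langle g_1n_1N_{k-1},\dots,g_\ell n_\ell N_{k-1}\rangle=G/N_{k-1}$. We do this by computing orders in the factor group via Theorem~\ref{thm: Factor-santha}, applied to $\langle g_in_i,N_{k-1}\rangle$ modulo $N_{k-1}$, and comparing with $|G|/|N_{k-1}|$.
\end{itemize}
If some trial succeeds, we keep those $\ell$ elements. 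Otherwise we run the analogous trials with $\ell+1$ elements $g_1n_1,\dots,g_\ell n_\ell,n_{\ell+1}$.

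\textbf{Correctness.} By Lemma~\ref{lemma: derek and tracey}, whichever of the two cases holds, each trial succeeds with probability at least $53/(90\log|G|)$, up to the $\epsilon$ sampling error. Hence $O(\log^2|G|)$ trials fail with probability $e^{-\Omega(\log|G|)}$, and more repetitions give error $2^{-\mathrm{poly}(n)}$. This is also why we always try size $\ell$ first: if $d(G/N_{k-1})=\ell$, the $\ell$-element trials succeed with high probability, so we never wrongly move to size $\ell+1$. If $d(G/N_{k-1})=\ell+1$, no $\ell$-element set can succeed, so the $\ell+1$ trials are reached and succeed with high probability. Either way the new generating set has size exactly $d(G/N_{k-1})$. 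After $u$ steps we obtain a minimum generating set of $G/N_0=G$.

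\textbf{Cost.} The procedure makes polynomially many quantum subroutine calls, namely the chief series computation, sampling, and order computation in factor groups. Their errors are union bounded over at most $n\cdot\mathrm{poly}(n)$ calls.

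\textbf{Main obstacle.} The delicate point is applying Lemma~\ref{lemma: derek and tracey} at the boundary indices. As stated, it requires $1\le k<u-1$, so the top step and the final step to $N_0$ need care. I would handle the top step directly, using the explicit structure of the characteristically simple quotient described above. For the remaining indices I would argue that the appendix proof of the lemma does not actually use the index restriction. A second concern is that group equality in factor groups rests on exact order computation for subgroups of $G/N_{k-1}$. This is supplied by Theorem~\ref{thm: Factor-santha}, applied to the subgroup $\langle g_in_i\rangle N_{k-1}$, which lies in $\Gamma_d$ because $\Gamma_d$ is subgroup-closed.
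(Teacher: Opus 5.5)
Your overall route is the paper's: compute a chief series, handle the top quotient $G/N_{u-1}$ as a base case, then lift down the series by multiplying by random elements of $N_k$, trying $\ell$ elements before $\ell+1$. Lemma~\ref{lemma: derek and tracey} gives the success probability and exact order computation gives the equality test. The genuine problem is the top step as you actually commit to it ("This uniform treatment is what we will do"). If you treat $G/N_{u-1}$ as a lift from $G/N_u=1$ with $\ell=0$, you only ever try $0$ and then $1$ element, so you succeed only when $G/N_{u-1}$ is cyclic. When $G/N_{u-1}$ is non-abelian simple it needs $2=\ell+2$ generators. Every trial then fails and the algorithm never produces a generating set. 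Already $G=A_5$, where $u=1$ and this step is the whole algorithm, is a counterexample. Theorem~\ref{thm:lift} is not valid at $\ell=0$; the correct bound is $d(\mathfrak G)\le\max\{2,\,d(\mathfrak G/\mathfrak N)+1\}$. This is exactly why the paper handles $i=u-1$ as a separate base case.

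The "direct" treatment you fall back on in your last paragraph needs repair as well. $G/N_{u-1}$ is not merely characteristically simple but simple, since $G/N_{u-1}=N_u/N_{u-1}$ is minimal normal in itself. You need this, because your reason "$|T|\le d!$, hence $T^k$ is $2$-generated" is false for large $k$. A generating pair of $T^k$ projects to $k$ pairwise $\mathrm{Aut}(T)$-inequivalent generating pairs of $T$, so $d(T^k)\to\infty$. For $k=1$, $2$-generation of a non-abelian simple group is a consequence of CFSG, not of bounded order. You must also say how to find a pair: sample random pairs from $G$ and test $|\langle x_1,x_2,N_{u-1}\rangle|=|G|$. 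Each trial succeeds with constant probability, by Kantor--Lubotzky as in the paper, or trivially with probability at least $(d!)^{-2}$ up to sampling error, since $|T|\le d!$.

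A smaller point concerns the boundary index. The index excluded by the stated lemma is $k=u-1$, the first lift from $G/N_{u-1}$ to $G/N_{u-2}$, not the final step to $N_0$; $k=1$ is allowed. Your plan to argue that the restriction is inessential is what the paper implicitly relies on too.
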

\begin{proof}

Let $1=N_0\triangleleft N_1\triangleleft\cdots\triangleleft N_u=G$ be a chief series of a group $G$. We describe a quantum algorithm to compute a minimum generating set of $G/N_i$ starting from $i=u-1$ and going down to $i=0$. When $i=0$, the algorithm outputs a minimum generating set of $G/N_0=G$. 

In the base case $i=u-1$, $G/N_{u-1}$ is simple. If $G/N_{u-1}$ is abelian, then any non-trivial element of $G/N_{u-1}$ generates $G/N_{u-1}$. Otherwise, any two uniformly and randomly chosen elements of $G/N_{u-1}$ generate $G/N_{u-1}$ (see Kantor and Lubotzky~\cite{Kantor}). 

Suppose we have found a generating set $\langle g_1N_i,\ldots, g_\ell N_i\rangle$ of $G/N_i$ for some $1\le i<u-1$. 
At this point, Lemma \ref{lemma: derek and tracey} can be used to find a minimum generating set for $G/N_{i-1}$. The ``lifting step'' by picking random elements from $N_i$ is incorporated in Step $2$ and Step $3$ of  Algorithm~\ref{fig:mingengamma}. Note that since the length of the chief series is at most $\log(|G|)$, the total number of executions of  Step $2$ or Step $3$ is at most $\log(|G|)$. Therefore, repeating Step $2$ and Step $3$ for $O(\log|G|)^2$ times in each of the cases ensures that the algorithm succeeds with high probability.     
\end{proof}

\begin{algorithm}[H]
\renewcommand{\alglinenumber}[1]{\textbf{Step #1:}}
\caption{MIN-GEN$(G)$}\label{fig:mingengamma}
\begin{algorithmic}[1]
\Statex \textbf{Input:} A generating set $\{g_1, g_2, \dots, g_k\}$ of a group $G$.
\Statex \textbf{Output:} A minimal generating set of the group $G$.
\Statex
\Statex \textbf{MIN-GEN}$(G)$
\State Compute a chief series of $G$ using previous section, say, $1=N_0\triangleleft  N_1\triangleleft \cdots\triangleleft  N_u=G$

\hspace{-1.4cm}\textbf{Base case:} $i= u-1$
\begin{itemize}
  
        \item If $G/N_{i}$ is abelian, then return any non-trivial element of $G/N_{i}$;
        \item Otherwise, pick a pair of elements uniformly at random from $G/N_{i}$ and check that it generates $G/N_{i}$; Repeat random sampling to find a valid pair.
\end{itemize}

\hspace{-1.4cm}\textbf{Inductive case:} Suppose for $0\le i<u-1$, $d(G/N_i)= \ell$ and $G/N_i = \langle g_1N_i,\ldots, g_\ell N_i\rangle$. 
\State Choose $\ell$ elements uniformly at random from $N_i$, say, $n_1,\cdots,n_\ell$, and if we find $\ell$ elements satisfying, $|G| = |\langle g_1n_1, \ldots, g_\ell n_\ell, N_{i-1}\rangle|$, return $G/N_{i-1} = \langle g_1n_1N_{i-1},\ldots,g_\ell n_\ell N_{i-1}\rangle$. Otherwise, repeat this step for  $O(\log(|G|)^2$ times.

\State If no such sequence of $\ell$ elements is found in Step 2, choose $\ell+1$ elements uniformly at random from $N_i$, say, $n_1,\ldots,n_\ell, n_{\ell+1}$, and if we found any $\ell+1$ elements satisfying, $|G| = |\langle g_1n_1, \ldots, g_\ell n_\ell,n_{\ell+1}, N_{i-1}\rangle|$, return $G/N_{i-1} = \langle g_1n_1N_{i-1},\ldots,g_\ell n_\ell N_{i-1}, n_{\ell+1}N_{i-1}\rangle$. Otherwise, repeat this step for  $O(\log(|G|))^2$ times.
\end{algorithmic}
\end{algorithm}

\section{Complexity of the Minimum Generating Set for Arbitrary Groups}
In this section, we discuss the computational complexity of the decision version of the $\mingen$ problem for black-box groups.  
\begin{theorem}\label{thm:main}
Let $G$ be a finite black-box group given by a generating set $S$ and let $t$ be a positive integer. The problem of deciding whether
$d(G)\le t$ belongs to $\mathrm{NP}$.
\end{theorem}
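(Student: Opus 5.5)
The plan is to give a direct certificate-and-verify argument based on the Reachability Theorem (\Cref{thm:reachability}(i)). First I dispose of large $t$. Every finite group satisfies $d(G)\le \log_2|G|$, because a chain of subgroups obtained by adding generators one at a time at least doubles the order at each step. Since elements are encoded by strings of length $n$, we have $|G|\le 2^n$, so $d(G)\le n$. Hence if $t\ge n$, or if $t\ge |S|$, the verifier accepts outright. Otherwise $t<n$, and the certificate will have size polynomial in $n+|S|$.

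For $t<n$, the certificate is a candidate generating set $h_1,\ldots,h_t$ (padding with the identity if fewer are needed). Each $h_j$ comes with a witness that it lies in $G$: a straight line program $C_j(x_1,\ldots,x_{|S|})$ of size at most $(1+\log|G|)^2\le (1+n)^2$ with $C_j(S)=h_j$. Such a program exists by \Cref{thm:reachability}(i) applied to $G=\langle S\rangle$.

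The certificate also needs a witness that $\langle h_1,\ldots,h_t\rangle=G$. It suffices that every input generator $s\in S$ lies in $\langle h_1,\ldots,h_t\rangle$. If $G=\langle h_1,\ldots,h_t\rangle$, then \Cref{thm:reachability}(i) applied to this generating set gives, for each $s\in S$, an SLP $D_s(y_1,\ldots,y_t)$ of size at most $(1+n)^2$ with $D_s(h_1,\ldots,h_t)=s$. The total certificate length is therefore $O\big((t+|S|)\,n^2\log(n+|S|)\big)$ bits, which is polynomial.

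The verifier works in three stages:
\begin{enumerate}
\item It evaluates each $C_j$ gate by gate using the multiplication and inversion oracles, producing encodings of $h_1,\ldots,h_t$.
\item It evaluates each $D_s$ on these encodings.
\item It checks that the output of $D_s$ equals $s$ for every $s\in S$.
\end{enumerate}
This is polynomially many oracle calls.

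Soundness holds because the $C_j$ force $h_j\in G$, so $\langle h_1,\ldots,h_t\rangle\le G$. The $D_s$ then force $S\subseteq\langle h_1,\ldots,h_t\rangle$, and so $G\le\langle h_1,\ldots,h_t\rangle$. Hence $d(G)\le t$. Completeness follows from the existence of the SLPs shown above.

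The main point requiring care is the equality test in the third stage. With the unique encoding assumed throughout the paper, it is just string comparison. For a general black-box group with non-unique encodings, I would invoke the standard identity-test oracle of the black-box model and check that $D_s(h_1,\ldots,h_t)\,s^{-1}$ equals the identity. I would state this convention explicitly. Everything else is routine bookkeeping of SLP sizes.
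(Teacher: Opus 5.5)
Your proof is correct and follows essentially the same approach as the paper. The certificate is a generating set of size at most $t$, together with Reachability-Theorem SLPs expressing each candidate generator in terms of $S$ and each element of $S$ in terms of the candidates. Your additional steps are sensible refinements that the paper leaves implicit: accepting outright when $t\ge \min(n,|S|)$ keeps the certificate polynomial even if $t$ is given in binary, and the identity-test oracle handles non-unique encodings.
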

The proof of this theorem is given in Appendix~\ref{sec:A3}. Next, we prove that deciding whether given sequences of subgroups $(N_0, \cdots, N_u = G)$ is a chief series of a group $G$ is in $\mathrm{AM}$.
\begin{lemma} \label{lemma:AM_CHIEF}
    Let $G$ be a black-box group, and let $(N_0, \cdots,N_u = G)$ be a sequence of subgroups given by generating sets. Then the problem of deciding whether the given sequence is a chief series of $G$ is in $\mathrm{AM}$.
\end{lemma}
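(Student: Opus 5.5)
To place the problem in $\mathrm{AM}$, I would split the verification into conditions. Each condition will either be checkable in $\mathrm{coRP}$/$\mathrm{BPP}$, or be a single $\mathrm{NP}$ or $\mathrm{AM}$ predicate. The final protocol is the conjunction of polynomially many such checks, and $\mathrm{AM}$ is closed under polynomial conjunction. The length $u$ is at most $\log|G|\le n$, so there are polynomially many indices. For each $i$, I verify three things: (a) $N_{i-1}\le N_i$; (b) $N_i\unlhd G$; and (c) $N_i/N_{i-1}$ is a nontrivial minimal normal subgroup of $G/N_{i-1}$. I also verify $N_0=1$ and $N_u=G$.

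\emph{Containment, normality, nontriviality.} For (a), Merlin supplies, for every generator $x$ of $N_{i-1}$, an SLP of polynomial size from the generators of $N_i$ that evaluates to $x$. Such SLPs exist by the Reachability Theorem (Theorem~\ref{thm:reachability}), and Arthur can verify them deterministically. Condition (b) is handled the same way: for each generator $g$ of $G$ and each generator $x$ of $N_i$, Merlin gives an SLP for $g^{-1}xg$ over the generators of $N_i$. Nontriviality, $N_{i-1}\ne N_i$, is a subgroup non-membership statement. Merlin names a generator of $N_i$ that lies outside $N_{i-1}$. Non-membership in black-box groups is in $\mathrm{AM}$ by Babai's result, via the protocol based on approximate lower bounds on $|N_{i-1}\langle x\rangle|$ versus $|N_{i-1}|$ and the Goldwasser--Sipser set lower bound. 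Under unique encoding, the set-size protocols apply directly.

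\emph{Minimality.} This is the main obstacle, because it is a universal statement. It says that no normal subgroup $K$ of $G$ satisfies $N_{i-1}<K<N_i$, and it is not obviously in $\mathrm{NP}$. My plan is to use the structure of chief factors. I would first compute (or have Merlin certify) $|N_i/N_{i-1}|$ through an order-verification protocol. Then I would split into two cases.

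\emph{Abelian case.} Merlin asserts that $N_i/N_{i-1}$ is elementary abelian of order $p^m$. Commutators and $p$-th powers of the generators lying in $N_{i-1}$ are checked with SLPs. The factor is then a $\mathbb{F}_p G$-module, and Arthur must check irreducibility. Merlin supplies a basis of the factor, given by generators of $N_i$ modulo $N_{i-1}$, together with the matrices of the conjugation action of each generator of $G$. Each matrix entry is certified by SLP membership witnesses, and linear independence of the basis is certified through non-membership ($\mathrm{AM}$). Irreducibility of a module given by explicit matrices can then be decided in deterministic polynomial time by the MeatAxe/Rónyai algorithm. Alternatively, $\mathrm{NP}\cap\mathrm{coNP}$ suffices here, since reducibility would be witnessed by an invariant subspace.

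\emph{Nonabelian case.} Merlin asserts that $N_i/N_{i-1}\cong T^k$ with $T$ nonabelian simple. The fact I would use is that a normal subgroup of $G$ contained in $T^k$ is a product of some of the $T$-factors. So minimality is equivalent to $G$ acting transitively by conjugation on the $k$ simple direct factors. Merlin supplies generators of the factors $T_1,\dots,T_k$ modulo $N_{i-1}$. Arthur verifies that they pairwise commute modulo $N_{i-1}$ and that together they generate $N_i$ modulo $N_{i-1}$; both checks use SLPs. Arthur verifies that each $T_j$ is simple and nonabelian through order and presentation certificates, which is the usual $\mathrm{NP}$-with-oracle argument of Babai--Beals. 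He verifies that conjugation by each generator of $G$ permutes the $T_j$, giving the permutation explicitly. Transitivity of the resulting permutation group on $k\le n$ points is then checked directly.

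Finally, the order claims hidden in these steps (the size of $T$, and the index $|N_i:N_{i-1}|$) need non-membership and order lower bounds. These I would handle with the standard $\mathrm{AM}$ protocols. Combining all the checks gives an $\mathrm{AM}$ protocol, completing the proof.
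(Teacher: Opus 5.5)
Your proposal is correct in outline, but it takes a different route from the paper. For containment and normality the paper does exactly what you do: SLP certificates, justified by the Reachability Theorem. For minimality, however, the paper simply invokes Babai's $\mathrm{AM}\cap\mathrm{coAM}$ protocol \cite[Corollary 12.1]{babaibounded} as a black box, with Merlin sending the minimality proofs at the start. Nontriviality of the factors is absorbed into that citation.

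You instead open up the minimality check into two cases.
\begin{itemize}
\item In the abelian case you reduce it to irreducibility of an explicitly certified $\mathbb{F}_pG$-module. The basis is certified by SLPs. Independence is certified by the non-membership checks $b_j\notin\langle N_{i-1},b_1,\dots,b_{j-1}\rangle$. The action matrices are certified by SLPs, and a linear-algebra irreducibility test finishes the case.
\item In the nonabelian case you reduce it to transitivity of $G$ on the simple factors $T_1,\dots,T_k$. Your argument here is sound: pairwise commuting, simple, nonabelian $T_j$ that generate the factor force $N_i/N_{i-1}\cong T_1\times\cdots\times T_k$, and $G$-invariant normal subgroups then correspond to $G$-invariant sets of factors.
\end{itemize}
What this buys you is transparency. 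The abelian case needs only SLP membership, non-membership and linear algebra, and your argument isolates exactly where deep machinery enters.

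That place is the step where you certify that each $T_jN_{i-1}/N_{i-1}$ is simple, which you leave as ``order and presentation certificates.'' To make it rigorous:
\begin{enumerate}
\item Merlin names the simple group.
\item Arthur takes a standard short presentation of it. The existence of polylogarithmic-length presentations for all finite simple groups depends on CFSG, through Babai--Goodman--Kantor--Luks--P\'alfy and Guralnick--Kantor--Kassabov--Lubotzky.
\item Merlin supplies images of the presentation generators in $T_jN_{i-1}$, with SLPs showing that these images and the given generators of $T_j$ generate the same subgroup modulo $N_{i-1}$.
\item Arthur checks the relations modulo $N_{i-1}$ with SLPs, and checks nontriviality by non-membership.
\end{enumerate}
The image is then a nontrivial quotient of a simple group, hence isomorphic to it. With this, the order claims you mention at the end are not actually needed. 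You should also explicitly check $T_j\le N_i$ modulo $N_{i-1}$, and that $p$ is prime in the abelian case.

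This CFSG-based step is very likely the same kind of ingredient that underlies Babai's result. So in the nonabelian case your protocol is not more elementary than the paper's citation, but it makes the structure of the verification explicit.
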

   \begin{proof}
 Merlin provides certificates certifying that $N_i \unlhd G$ and $N_{i-1 }\leq N_{i}$ for all $0\le i\le u$. Arthur verifies these certificates in polynomial time by the Reachability Theorem \ref{thm:reachability}.

Arthur can verify in $\mathrm{AM}$ that each factor $N_i/N_{i-1}$ is a minimal normal subgroup of $G/N_{i-1}$ using Babai's $\mathrm{AM}\cap\mathrm{coAM}$ protocol \cite[Corollary 12.1]{babaibounded}.
For this, Merlin supplies the proof for the minimality of $N_i/N_{i-1}$ in $G/N_{i-1}$ for all $i=1,\ldots, u-1$ at the beginning of the protocol.
\end{proof}
\begin{fact}
Suppose $G/N = \langle g_1N, \ldots, g_rN \rangle$. For any $gN \in G/N$, $gN = \operatorname{SLP}(g_1N, \ldots, g_rN) $ if and only if $g^{-1}\operatorname{SLP}(g_1N, \ldots, g_rN) \in N$. 
\end{fact}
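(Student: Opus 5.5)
The statement is an elementary equivalence in the quotient $G/N$, so the plan is to unwind the definition of evaluating an SLP in a factor group. First we make precise what $\operatorname{SLP}(g_1N,\ldots,g_rN)$ means. An SLP $C(x_1,\ldots,x_r)$ is a circuit of multiplication and inverse gates, so it can be evaluated in any group. The key observation is that evaluation commutes with group homomorphisms: if $\varphi: G\to G/N$ is the natural projection $g\mapsto gN$, then
\[
C(g_1N,\ldots,g_rN)=\varphi\bigl(C(g_1,\ldots,g_r)\bigr)=C(g_1,\ldots,g_r)\,N .
\]
We prove this by induction on the gates of $C$ in topological order. Input gates are immediate. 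A multiplication gate holds because $(aN)(bN)=abN$. An inverse gate holds because $(aN)^{-1}=a^{-1}N$. Both facts are exactly the homomorphism property of $\varphi$, which relies on $N\unlhd G$.

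With this identity, write $h=C(g_1,\ldots,g_r)\in G$, which Arthur or the algorithm can compute using only the black-box oracle for $G$. The condition $gN=\operatorname{SLP}(g_1N,\ldots,g_rN)$ then reads $gN=hN$. By the standard coset criterion this holds if and only if $g^{-1}h\in N$. That is the claimed condition $g^{-1}\operatorname{SLP}(g_1N,\ldots,g_rN)\in N$, where the SLP is interpreted as its evaluation $h$ on representatives in $G$. Both directions follow at once from this chain of equivalences.

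The only step needing care is the interpretation issue. The expression $g^{-1}\operatorname{SLP}(g_1N,\ldots,g_rN)$ mixes an element with a coset, so we will state explicitly that the SLP is evaluated on the representatives $g_i$. We will also note that the result is independent of the choice of representatives, since changing $g_i$ to $g_in_i$ with $n_i\in N$ does not change the image under $\varphi$. The homomorphism-commutation induction is routine. The fact's use later is that membership of $gN$ via a claimed SLP reduces to a membership test in $N$, which Merlin can certify with an SLP by \cref{thm:reachability}.
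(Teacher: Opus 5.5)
Your argument is correct: the paper states this as a Fact without proof, and the implicit justification is exactly yours. Projecting the SLP evaluated on the representatives gives $C(g_1,\ldots,g_r)N$, and the coset criterion $gN=hN \iff g^{-1}h\in N$ then finishes it; your remark that $\operatorname{SLP}(\cdots)$ in $g^{-1}\operatorname{SLP}(\cdots)$ must be read as evaluated on representatives in $G$ (independently of their choice) usefully clarifies the paper's slightly abusive notation.
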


\begin{theorem}
    Let $G$ be a black-box group generated by a set $S = \{y_1, \dots, y_r\}$. The problem of computing a minimum generating set is in $\mathrm{AM}$.
\end{theorem}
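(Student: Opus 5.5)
Since the $\NP$ upper bound for ``$d(G)\le t$'' is already given by Theorem~\ref{thm:main}, the real content is the matching verification that no smaller generating set exists. The natural certificate for $d(G)$ is a chief series together with a generating set lifted along it, as in the $\Gamma_d$ algorithm. The protocol is therefore the following. Merlin sends:
\begin{itemize}
\item a chief series $1=N_0\triangleleft N_1\triangleleft\cdots\triangleleft N_u=G$, each $N_i$ given by generators;
\item for every $i$, a generating set $\{g^{(i)}_1N_i,\ldots,g^{(i)}_{\ell_i}N_i\}$ of $G/N_i$, claimed to have size $d(G/N_i)$, where $\ell_{u}=0$ and $\ell_{u-1}\in\{1,2\}$.
\end{itemize}
Consecutive sets are related exactly as in Theorem~\ref{thm:lift}: $g^{(i-1)}_j=g^{(i)}_jn_j$ with $n_j\in N_i$, plus possibly one extra $n_{\ell+1}\in N_i$ when $\ell_{i-1}=\ell_i+1$. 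The output is the final set $\{g^{(0)}_1,\ldots,g^{(0)}_{\ell_0}\}$.

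Arthur runs three checks.
\begin{itemize}
\item \emph{Chief series.} He verifies the chief-series property using Lemma~\ref{lemma:AM_CHIEF}.
\item \emph{Generation.} He checks that each claimed set generates $G/N_i$. For every original generator $y_s$, Merlin supplies an SLP (of polynomial size by Theorem~\ref{thm:reachability}) expressing $y_sN_i$ in the claimed generators. By the Fact, the condition is that $y_s^{-1}\operatorname{SLP}(\cdots)\in N_i$, which Merlin certifies with a further SLP over the generators of $N_i$. The conditions $n_j\in N_i$ are certified the same way.
\item \emph{Minimality.} This is the main obstacle. Theorem~\ref{thm:lift} shows $\ell_{i-1}\in\{\ell_i,\ell_i+1\}$, so minimality can be argued inductively from the top. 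The base is immediate: a simple group needs one generator if it has prime order and two otherwise. The remaining case is a step where $\ell_{i-1}=\ell_i+1$. There Arthur must be convinced that \emph{no} choice $n_1,\ldots,n_{\ell}\in N_i$ gives $\langle g_1n_1,\ldots,g_\ell n_\ell\rangle N_{i-1}=G$; by Theorem~\ref{thm:lift} this is equivalent to $d(G/N_{i-1})>\ell_i$.
\end{itemize}

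This is a non-existence statement, and I would handle it by counting with a Goldwasser--Sipser set-lower-bound protocol, or equivalently by exploiting the probability bound of Lemma~\ref{lemma: derek and tracey}. If $d(G/N_{i-1})=\ell$, then a $53/(90\log|G|)$ fraction of tuples in $N_i^{\ell}$ lift successfully; otherwise the fraction is $0$. Arthur wants to certify that the set of successful tuples is empty, i.e.\ that the set of failing tuples is all of $N_i^{\ell}$. The failing set has size either $|N_i|^\ell$ or at most $(1-c/\log|G|)|N_i|^\ell$, a polynomial gap. Amplifying by taking $k=O(\log^2|G|)$-fold products turns this into a constant-factor gap between $|N_i|^{\ell k}$ and $(1/2)|N_i|^{\ell k}$.

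Two ingredients remain for Arthur. First, he needs $|N_i|$ (and $|N_{i-1}|$), which Merlin can certify in $\AM\cap\coAM$ by Babai's protocols~\cite{babaibounded}; the order of the quotient follows. Second, membership of a tuple in the failing set must be verifiable by Arthur, which is itself a co-statement (``this tuple does not generate''). To supply it I would have Merlin exhibit a proper subgroup $M$ with $N_{i-1}\le M<G$ containing all $g_jn_j$. Properness is then certified as in the chief-series lemma, using a Babai-style $\AM$ proof that $|M|<|G|$.

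Since all these subprotocols are constant-round $\AM$ and can be run in parallel over the $u\le\log|G|$ levels, with the error of each reduced by repetition, the whole protocol stays in $\AM$. This yields the decision version in $\NP\cap\coAM$.
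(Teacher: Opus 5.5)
Your protocol is correct, and its overall structure matches the paper's. Both use a chief series certified through Lemma~\ref{lemma:AM_CHIEF} and SLP-certified generating sets for every $G/N_i$. Both propagate minimality downward from the top using $\ell_{i-1}\in\{\ell_i,\ell_i+1\}$, and both settle the critical case $\ell_{i-1}=\ell_i+1$ with the Holt--Tracey bound of Lemma~\ref{lemma: derek and tracey}.

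You diverge only in how you certify that \emph{every} tuple in $N_i^{\ell}$ fails to lift. The paper does this directly. Arthur himself draws $O(\log^2|G|)$ random tuples from $N_i^{\ell}$ (via the sampler of Theorem~\ref{thm:reachability}(ii)), and Merlin must prove, in parallel, that each $\langle g_1n_1,\ldots,g_\ell n_\ell\rangle N_{i-1}$ is a proper subgroup. If in fact $d(G/N_{i-1})=\ell$, then some sampled tuple generates except with probability $(1-53/(90\log|G|))^{c\log^2|G|}$, and Merlin's proof fails on that tuple.

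Your Goldwasser--Sipser lower bound on the $k$-fold product of the failing set certifies the same thing but costs more:
\begin{itemize}
\item Arthur needs an AM-certified value of $|N_i|$, in particular an upper bound; otherwise a cheating Merlin simply lowers the threshold $|N_i|^{\ell k}$.
\item Hashing counts encodings rather than elements, so the size comparison is only meaningful under unique encoding.
\item The membership predicate is itself an AM statement nested inside the hashing round.
\end{itemize}
Since Arthur can sample from $N_i^{\ell}$ on his own, none of this is needed. What your version does make explicit is the per-tuple claim: properness of a subgroup containing the lifted elements. For this, $M=\langle g_1n_1,\ldots,g_\ell n_\ell,N_{i-1}\rangle$ itself suffices, so Merlin need not exhibit anything.

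Forcing consecutive generating sets to be lifts of one another is harmless, since Theorem~\ref{thm:lift} lets an honest Merlin comply, but it is not needed.

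Finally, one detail that both you and the paper treat as immediate: certifying $\ell_{u-1}=2$ requires an AM proof that $G/N_{u-1}$ is non-abelian. For example, Merlin can show that some commutator $[y_s,y_t]$ lies outside $N_{u-1}$, using Babai's non-membership protocol.
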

\begin{proof}
Since constant-round protocols are in $\AM$, by Lemma~\ref{lemma:AM_CHIEF} we assume, without loss of generality, that a chief series $1=N_{0}\triangleleft N_{1}\triangleleft\cdots\triangleleft N_{u}=G$ is given.
    
Merlin supplies a minimum generating set for each $G/N_k$ and also provides certificates for these claims. We describe the certificates subsequently. The protocol proceeds by verifying that Merlin has provided a correct minimum generating set for each $G/N_k$ in parallel.

\vspace{0.2cm}
\noindent   
{\bf For $\bm{G/N_{u-1}}:$} From the properties of chief series, the quotient $G/N_{u-1}$ is simple. If $G/N_{u-1}$ is abelian, Merlin is also supposed to provide a generator $xN_{u-1}$ of $G/N_{u-1}$. Merlin must provide an $\text{SLP}_i(xN_{u-1})$ outputting $y_iN_{u-1}$ for each generator $y_i$.
Similarly, if $G/N_{u-1}$ is non-abelian, then it is generated by two elements (see~\cite{Kantor}); and Merlin's proof must include a generating pair $x_1N_{u-1},x_2N_{u-1}$ for $G/N_{u-1}$, and an $\text{SLP}_i(x_1N_{u-1},x_2N_{u-1})$ outputting $y_iN_{u-1}$ for each $i$.

\vspace{0.2cm}
\noindent
\textbf{For $\bm{G/N_{k-1}}$, $\bm{k\leq u}$:} Suppose that $d(G/N_k)=d$ and  $G/N_k=\langle g_1N_k,\ldots,g_dN_k\rangle$, which is verified by the $k$th parallel run of the protocol. We verify that the generating set provided by Merlin for $G/N_{k-1}$ is indeed its minimum generating set. Here, we have two cases.

\textbf{Case 1.} Suppose Merlin claims that $d(G/N_{k-1})=d$ and sends a generating set $\{x_1N_{k-1},\ldots,x_dN_{k-1}\}$ of $G/N_{k-1}$. Merlin also provide an $\text{SLP}_i(x_1N_{k-1},\ldots,x_dN_{k-1})$ outputting $y_iN_{k-1}$ for each $i \in \{1, \dots, r\}$. Arthur verifies these SLPs; if all SLPs are correct, and we know $d(G/N_k)=d$, then Merlin's claim is correct.

\textbf{Case 2.} Suppose Merlin claims that $d(G/N_{k-1}) = d+1$ and sends a generating set $\{x_1N_{k-1},\ldots,x_{d+1}N_{k-1}\}$ of $G/N_{k-1}$. Merlin also provide an $\text{SLP}_i(x_1N_{k-1},\ldots,x_{d+1}N_{k-1})$ outputting $y_iN_{k-1}$ for each $i \in \{1, \dots, r\}$. Arthur also needs to verify that $d(G/N_{k-1}) \neq d$. To do this, Arthur picks $O(\log |G|)^2$ uniformly random sequences of $d$ elements $(n_1,\ldots,n_d)\in N_k^d$,  send these tuples to Merlin and asks Merlin to prove that $\langle g_1n_1N_{k-1},\ldots,g_dn_dN_{k-1}\rangle \ne G/N_{k-1}$ for each tuple in parallel. This reduces to the $\mathrm{AM}$ protocol for the order verification of a group. This introduces an extra round in the protocol. Arthur rejects the input if any of the $O(\log|G|)$ parallel runs fail.

This concludes the description of the protocol.

\noindent\textbf{Completeness} Suppose that Merlin provides the accurate chief series and the minimal generating sets corresponding to each level of the chief series. By Lemma \ref{lemma:AM_CHIEF}, Arthur will accept the chief series. For $G/N_{u-1}$ and in Case $1$, Merlin provides the generating sets with SLPs. Arthur checks the correctness of SLPs and accepts with probability 1. In Case $2$, no $d$ elements will generates $G/N_{K-1}$, that is,
 $\langle g_1n_1N_{k-1},\ldots,g_dn_dN_{k-1}\rangle \ne G/N_{k-1}$ for any $n_1,\ldots, n_d \in N^d_k$. Therefore, Arthur will accept this with high probability.

 \noindent\textbf{Soundness} Let $k$ be the largest index where Merlin is trying to cheat by claiming
the $d(G/N_{k-1})$ equals $d+1$ while it is actually $d$. Let $p $ be the probability that a random sequence of $d$ elements $n_1,\ldots,n_d\in N_k$ generates $G/N_{k-1}$. By Theorem \ref{lemma: derek and tracey},
\[p=\Pr\left[G/N_{k-1}=\left\langle g_1n_1N_{k-1},\ldots,g_d n_d N_{k-1}\right\rangle\right]\geq\frac{53}{90\log |G|}.\]
Arthur selects $c(\log|G|)^2$ (where $c$ is a constant) uniformly and independently random sequences of $d$ elements from $N_k$. The probability that Arthur fails to pick correct $d$ elements from $c(\log|G|)^2$ that generates $G/N_{k-1}$ is 
$$\Pr[\text{Failure}]= (1-p)^{c(\log|G|)^2)}\le \left(1-\frac{53}{90\log|G|}\right )^{c(\log|G|)^2}.$$
Using the inequality $1-x \le e^{-x}$, $\Pr[\text{Failure}] \le e^{-\frac{53c}{90}\log|G|}$.
This failure probability is exponentially small. Therefore, Arthur rejects Merlin's proof with high probability. Since all verification runs in parallel and the number of interactions
is also constant, the protocol is in $\mathrm{AM}$.
\end{proof}

\bibliographystyle{plain}
\bibliography{main}

\appendix
\section{Appendix}\label{sec:appendix}

\subsection{Generating set for direct product decomposition}\label{sec:A1}
In the following discussion, a procedure for finding a basis of the direct decomposition of $(G/N)^{i}$ by Watrous~\cite {watrous} is explained.

\begin{proof}[\bf{Part of the \textsc{MIN-GEN} Algorithm (Section~\ref{section;Minimum Generating Set of  Solvable Group}})]

We continue with the notation introduced in the preceding discussion. This part is included for completeness. In this remark, we explain how Watrous’s algorithm for factoring abelian groups can be used to compute a generating set for the direct product decomposition of $(G/N)^{(i)}$. Suppose $G^{(i)} = \langle y_1, y_2, \dots, y_t \rangle$ and $L = \operatorname{lcm}(\operatorname{ord}(y_1), \operatorname{ord}(y_2), \dots, \operatorname{ord}(y_t))$. We can efficiently compute the order of each $y_j$ using Shor's order-finding algorithm. Define a mapping
\[
\begin{aligned}
\phi : \mathbb{Z}_L^t &\longrightarrow (G/N)^{(i)} \\
\phi(a_1, a_2, \dots, a_t) &\longmapsto y_1^{a_1} y_2^{a_2} \ldots y_t^{a_t} N
\end{aligned}
\]

Clearly, $\phi$ is a homomorphism with kernel
\[
\ker(\phi) = \{ (a_1, \dots, a_t) \in \mathbb{Z}_L^t \mid y_1^{a_1} y_2^{a_2} \ldots y_t^{a_t} \in N \}
\]
and 
\[
\ker(\phi)^\perp = \left\{ (b_1, \dots, b_t) \in \mathbb{Z}_L^t\ \middle|\ \sum_{j=1}^t a_j b_j \equiv 0 \pmod{L} \quad \forall (a_1, \dots, a_t) \in \ker{(\phi)} \right\}
.\]
\end{proof}
Note that $\mathbb{Z}_L^t/\ker(\phi) \cong (G/N)^{(i)}$. By running Watrous's algorithm for a factor-abelian group \cite{watrous}, we obtain a generating set $\mathcal{X}$ for $\ker(\phi)^\perp$. Suppose $M$ is a matrix whose columns are elements of $\mathcal{X}$. Then $\ker(\phi)$ can be computed, since it is generated by the vectors $a=(a_1,\ldots,a_t)\in \mathbb{Z}_L^t$ satisfying $M^Ta^T =0 $ modulo $L$. We next apply a standard finite abelian group decomposition procedure (see \cite{mosca}). This procedure gives a set of generators $x_1,\ldots , x_r\in \mathbb{Z}_L^t/\ker(\phi)$ such that $\mathbb{Z}_L^t/\ker(\phi)= \langle x_1\rangle\oplus \cdots\oplus\langle x_r\rangle$. Applying $\phi$ to each $x_i$ gives the generating set of $(G/N)^{(i)}$, as desired.

\subsection{Proof sketch for a version of the Lemma given by Holt and Tracey}\label{sec:A2}
We now indicate how the proof presented by Holt and Tracey \cite{derek} can be modified to prove Lemma~\ref{lemma: derek and tracey}. 
A group $G$ acts on a group $X$, called a $G$-set, if there exists a homomorphism from $G$ to $\operatorname{Aut}(X),$ or equivalently, we say $G$ acts on $X$ via automorphisms. A $G$-group $X$ is said to be \emph{irreducible} if $X$ has no proper non-trivial $G$-invariant subgroup. 
We say two $G$-groups $X_1$ and $X_2$ are \emph{$G$-isomorphic}, and write $X_1\cong_G X_2$, if there exists an isomorphism $\phi:X_1\to X_2$ such that
$$(\phi(x))^g=\phi(x^g)\qquad \text{for all } x\in X_1 \text{ and } g\in G.
$$

Two $G$-groups $X_1$ and $X_2$ are said to be \emph{$G$-equivalent},
$X_1\sim_G X_2$, if there are isomorphisms
$\phi:X_1\to X_2$ and $\Phi:X_1\rtimes G\to X_2\rtimes G$
such that the following diagram commutes:
$$
\begin{tikzcd}[column sep=large, row sep=large]
1 \arrow[r] & X_1 \arrow[r] \arrow[d, "\phi"] & X_1\rtimes G \arrow[r] \arrow[d, "\Phi"] & G \arrow[r] \arrow[d, "id"] & 1 \\
1 \arrow[r] & X_2 \arrow[r] & X_2\rtimes G \arrow[r] & G \arrow[r] & 1
\end{tikzcd}
$$
 For a non-abelian chief factor $X,$ the total number of chief factors of $G$ that are $G$-equivalent to $X$ is denoted by $\delta_G(X).$ Note that $\delta_G(X)\leq \log|G|.$ 
\begin{proof}[\bf{Brief proof-sketch of Theorem~\ref{lemma: derek and tracey}}]
 Let $N=N_k/N_{k-1}.$ If $d(G/N_{k-1})=\ell$, then by  Theorem~\ref{thm:lift} there exists $n_1,n_2,\dots,n_\ell \in N_k$ such that $G/N_{k-1}=\langle g_1n_1N_{k-1},\ldots,g_\ell n_\ell N_{k-1}\rangle.$ In the proof of~\cite[Lemma 7]{derek}, Holt and Tracey showed that if $n_1,n_2,\dots,n_\ell \in N_k$ elements are chosen independently and uniformly at random, the probability that  $g_1n_1N_{k-1},\ldots,g_\ell n_\ell N_{k-1}$  generates $G/N_{k-1}$  is $53/(90\delta_{G/N_{k-1}}(N)).$   
 Since $\delta_{G/N_{k-1}}(N)\leq \log|G/N_{k-1}|\leq \log|G|$, we can conclude that \[\Pr\left[G/N_{k-1}=\left\langle g_1n_1N_{k-1},\ldots,g_\ell n_\ell N_{k-1}\right\rangle\right]\geq\frac{53}{90\delta_{G/N_{k-1}}(N)}\geq \frac{53}{90 \log|G|}.\] The proof for the case when $d(G/N_{k-1})=\ell+1$ is similar.
\end{proof}

\subsection{\texorpdfstring{$\mingen$ is in $\NP$}{mingen is in NP}}\label{sec:A3}
\begin{proof}[\bf{Proof of Theorem~\ref{thm:main}}] To prove that $d(G)\le t$, prover provides a generating set $T$. Prover also provide certificates certifying  \(g=\text{SLP}(T)\) for all $g\in S$ and \(h=\text{SLP}(S)\) for all $h\in T$.

The verifier first checks that $|T|\le t$, then verifies each SLP certificate. By the Reachability Theorem, each membership certificate has polynomial length and can be verified in polynomial time. If all certificates are valid, then $G=\langle T\rangle$, and $T$ is a generating set of $G$ of size at most $t$. 
\end{proof}

\end{document}